\numberwithin{equation}{section}
\newcommand{\chiara}[1]{{(* {\color{RedViolet}Ch:{} \small #1}*)}}
\newcommand{\rouven}[1]{{ (* {\color{BurntOrange}R:{} \small #1}*)}}
\def\be{\begin{equation}}
\def\ee{\end{equation}}
\def\bea{\begin{eqnarray}}
\def\eea{\end{eqnarray}}
\def\<{\langle}
\def\>{\rangle}
\def\~{\tilde}
\newcommand{\R}{\mathbb R}
\newcommand{\N}{\mathbb N}
\newtheorem{theorem}{Theorem}[section]
\newtheorem{definition}[theorem]{Definition}
\newtheorem*{remark}{Remark}
\begin{document}

\begingroup\parindent0pt
\centering
\begingroup\LARGE
\bf
Integrable heat conduction model
\par\endgroup
 \vspace{3.5em}
 \begingroup\large
 {\bf Chiara Franceschini}, 
 {\bf Rouven Frassek}, 
 {\bf Cristian Giardin\`{a}}
 \par\endgroup
\vspace{2em}
%
University of Modena and Reggio Emilia,\\
Department of Mathematics,\\
 Via G. Campi 213/b, 41125 Modena, Italy\\
\vspace{5em}


 \vspace{1.cm}
 \begin{abstract}
 \noindent
We consider a stochastic process of heat conduction where energy is redistributed along a chain between nearest neighbor sites via an improper beta distribution. Similar to the well-known Kipnis-Marchioro-Presutti (KMP) model, the finite chain is coupled at its ends with two reservoirs that break the conservation of energy when working at different temperatures. At variance with KMP, the model considered here is integrable and one can  write in a closed form the $n$-point  correlation functions of the non-equilibrium steady state. As a consequence of the exact solution one can directly prove that the system is in a `local equilibrium' and  described at the macro-scale by a product measure. 
Integrability manifests itself through the  description of the model via the open Heisenberg chain with non-compact spins. The algebraic formulation of the model allows to interpret its duality relation with a purely absorbing particle system as a change of representation.
 \end{abstract}

\endgroup

\thispagestyle{empty}
\setcounter{tocdepth}{2}


\newpage
\section{Introduction}

In their paper from 1982 \cite{kipnis1982heat}, 
Kipnis, Marchioro and Presutti (KMP) introduced
a solvable model of heat conduction. They considered a chain of
energy-exchanging harmonic oscillators, that is coupled 
 to Gibbs reservoirs 
of different temperatures at the two ends of the chain.
They successfully implemented the idea
of modelling the non-harmonic effects
by a stochastic process which redistributes 
`microcanonically' the energy between nearest-neighbor oscillators, i.e. keeping constant their total energy. The pioneering idea behind
the KMP model was substantially expanded in several
later works and found applications in different
contexts such as fluctuations and large deviations
of the temperature profile and of the current
\cite{krapivsky2012fluctuations,zarfaty2016statistics,bettelheim2022inverse, bertini2015macroscopic},
dynamical phase transitions \cite{espigares2013dynamical},
anomalous heat conduction of 
momentum conserving systems
\cite{bernardin2005fourier,giardina2005fourier, 
basile2006momentum, basile2009thermal}, 
duality theory of Markov processes and hidden symmetries
\cite{giardina2009duality,carinci2013duality}.

The KMP model was thus the first mathematical
model of energy-transport where it was possible to rigorously 
prove that the non-equilibrium steady state has the local equilibrium property and furthermore Fourier's law with a constant conductivity holds. The main reason behind 
the solvability of the KMP
process is ``duality'', i.e. the existence of an
associated dual process that
is simpler to study. By duality, the computation
of the $n$-point correlations functions in the
non-equilibrium steady state is reduced to the 
problem of the exit distribution of $n$ dual particles.
Propagation of chaos and local equilibrium were then
obtained by showing that, in the macroscopic limit,
the dual particles essentially behave like
independent particles \cite{kipnis1982heat}.

However, the full microscopic
description of the non-equilibrium stationary state
remained inaccessible in the KMP model,
as the absorption probabilities of the
dual particles have not been computed in closed-form.
In this paper we consider an energy-redistribution
model which, while being similar to the one of KMP,
allows a full description {\em at microscopic level}
of the non-equilibrium steady state.
More precisely, for the integrable heat conduction model considered in this paper we {compute all the multivariate moments of the stationary measure (non-equilibrium steady state).} From the exact solution we establish that
when the reservoir parameters are equal (equilibrium) we have a Gibbs measure corresponding to a product of Gamma distributions.
If instead the reservoir parameters are different (non-equilibrium) we find
long-range correlations.
Local equilibrium, i.e.
a product measure around
each {\em macroscopic} point with inhomogeneous parameters interpolating between the reservoir parameters, is then recovered as a consequence of the exact solution.

The closed form expression for the
multivariate moments paves the way
to the rigorous study of density fluctuations
(both typical fluctuations as well as
large deviations) for non-equilibrium steady states \cite{LD},
with the possibility of a direct comparison  to the predictions of
the Macroscopic Fluctuation Theory (MFT) \cite{bertini2015macroscopic} developed
for boundary driven diffusive systems.

\paragraph{Added note.} In the progress of this work, we
became aware of the interesting article
\cite{CGT} that relates
to the model studied in this paper. More precisely, the authors 
considered the most general 
model of energy redistribution with
a redistribution rule that is of ``zero-range''
type, namely the rate at which the energy
is moved between two sites is just a function
of the energy at the site where it is taken off.
Remarkably, within this large class of models,
the authors show that the model with generator \eqref{eq:gen-bulk} studied 
in this paper, as well as the model with generator \eqref{eq:gen-bulk2} studied in \cite{Frassek:2019vjt,frassek2021exact},
emerge as the only two models of ``zero-range'' type where the following three properties are fullfilled:
\textit{i)} they are of gradient type; \textit{ii)} they have
a product measure at equilibrium; 
\textit{iii)} they have constant diffusivity and
quadratic mobility, which in turn guarantees
the possibility of computing
large deviation functions using the approach of
 Macroscopic Fluctuation Theory \cite{bertini2015macroscopic}. Thus, within the class of ``zero-range'' energy redistribution models, the explicit  writing of the microscopic correlation functions in the non-equilibrium steady state (a fact that is rooted in the duality and integrability of the models) seems to be related to the solvability of the variational problem emerging in the study of large deviation functions via MFT.

\paragraph{Paper organization.}
The  paper is organized as follows.
In Section~\ref{sec2.1} we define our model and state our main results, first in the equilibrium set-up (Section~\ref{sec2.3}), then for the non-equilibrium setting (Section~\ref{sec2.4}). We also comment on the relation of our model to 
other energy and/or particle redistribution models previously considered in the literature in Section~\ref{other-models}. The proof of our results are contained in Section~\ref{sec:proofs},
which contains the verification of duality 
via a direct computation
and the proof of the main results.
Finally in Section~\ref{sec4} duality is derived from the algebraic structure
of the model, which is essentially given by the open XXX Heisenberg chain with non-compact spins.
In  Appendix~\ref{appendixa} a direct
proof of reversibility at equilibrium is given, while  Appendix~\ref{B} contains some formulas used in the proof of the
 algebraic structure of the model.

\section{The model and the main result}
\label{sec2}

\subsection{Model definition}
\label{sec2.1}
We consider a one-dimensional lattice system of interacting particles. For $i\in\{1,\ldots, N\}$
we denote by  $y_i\in \R_+:=(0,\infty)$ the energy
of the $i^{th}$ particle
and we collect all the energies into the vector $y=(y_1,\ldots,y_N)\in \R_+^N$.

The system undergoes a stochastic evolution  defined by a Markov process $\{y(t)\,,\,t\ge 0\}$
with generator working on function $f:\R_+^N\to \R$ given by
\be
\label{eq:gen}
\mathscr{L} = {\mathscr{L}_1} + \sum_{i=1}^{N-1} {\mathscr{L}}_{i,i+1} + {\mathscr{L}}_N\;.
\ee
The bulk term ${\mathscr{L}}_{i,i+1}$,
which depends on a parameter $s > 0$, describes the exchange of energy between the particles on the bond $(i,i+1)$; it will be convenient to further split it as
\begin{equation}\label{eq:gen-bulk}
{\mathscr{L}_{i,i+1}} = {\mathscr{L}^{\rightarrow}_{i,i+1}} + {\mathscr{L}^{\leftarrow}_{i,i+1}}
\end{equation} 
 where
\begin{align} 
\label{eq:gen-bulk-right}
{\mathscr{L}^{\rightarrow}_{i,i+1}}f(y)  &=
\int_0^{y_i}\frac{d\alpha}{\alpha} \left(1-\frac{\alpha}{y_i}\right)^{2s-1}\left[f(y_1,\ldots,y_i-\alpha,y_{i+1}+\alpha,\ldots,y_N)-f(y)\right]\,, \\
\intertext{and} 
\label{eq:gen-bulk-left}
{\mathscr{L}^{\leftarrow}_{i,i+1}}f(y) &=
\int_0^{y_{i+1}}\frac{d\alpha}{\alpha} \left(1-\frac{\alpha}{y_{i+1}}\right)^{2s-1}\left[f(y_1,\ldots,y_i+\alpha,y_{i+1}-\alpha,\ldots, y_N)-f(y)\right]\;.
\end{align}
Thus $\mathscr{L}^{\rightarrow}_{i,i+1}$ describes
the stochastic movement of energy towards the right, i.e. from site $i$ to site $i+1$, and similarly $\mathscr{L}^{\leftarrow}_{i,i+1}$ describes
the movement of energy towards the left.
Obviously, the process generated by the bulk generator preserves the total energy 
$\sum_{i=1}^N y_i$.
For a discussion about the origin and interpretation of
such generator see Section~\ref{other-models}.

The boundary terms, which depend on additional parameters 
$\lambda_L>0,\lambda_R >0$, are given by 
\begin{equation}  \label{eq:leftbnd}
\begin{split}
\mathscr{L}_{1}f(y)&
=
\int_0^{y_1}\frac{d\alpha}{\alpha} \left(1-\frac{\alpha}{y_1}\right)^{2s-1}\left[f(y_1-\alpha,\ldots,y_N)-f(y)\right] \\
&\quad + 
 \int_0^{+\infty}\frac{d\alpha}{\alpha} e^{- \lambda_L \alpha} \left[f(y_1+\alpha,\ldots,y_N)-f(y)\right]\,,
 \end{split}
 \end{equation}  
and 
 \begin{equation} \label{eq:rightbnd}
 \begin{split}
\mathscr{L}_{N}f(y)
&=
\int_0^{y_N}\frac{d\alpha}{\alpha} \left(1-\frac{\alpha}{y_N}\right)^{2s-1}\left[f(y_1,\ldots,y_N-\alpha)-f(y)\right] \\
& \quad+ 
 \int_0^{+\infty}\frac{d\alpha}{\alpha} e^{- \lambda_R \alpha} \left[f(y_1,\ldots,y_N+\alpha)-f(y)\right]\,.
 \end{split}
 \end{equation}  
They model the reservoirs which insert and remove energy at the left and right ends of the chain. 

Thus, the full process generated by
\eqref{eq:gen} takes value on $\R_+^N$
and it does not conserve the total
energy. In fact, it is a model
for heat conduction (or energy transport) 
from one side of the chain to the other.

\begin{remark}
 In the following we will 
 consider the action of the 
 generator \eqref{eq:gen}
on polynomial functions.
We do not address here the
problem of characterizing
the domain of the generator.
\end{remark}

\subsection{Comparison to other energy-redistribution models}
\label{other-models}

The stochastic energy redistribution rule encoded in the bulk generator \eqref{eq:gen-bulk} can be described as follows. Fix a couple $(i,i+1)$ and consider the movement of energy to the right described by 
${\mathscr{L}^{\rightarrow}_{i,i+1}}$ (a similar reasoning
can be made for the movement of energy to the left described by ${\mathscr{L}^{\leftarrow}_{i,i+1}}$).
For $a,b>0$ let $U$ be  a random variable with 
$\text{Beta}(a,b)$ distribution, i.e. having probability density
$$
\sigma_{a,b}(u) = \frac{u^{a-1} \left(1-u\right)^{b-1}}{B(a,b)}\,,
$$ 
where $B(a,b) = \frac{\Gamma(a)\Gamma(b)}{\Gamma(a+b)}$
is the Beta function.
Consider the process where the energy at site $i$
evolves as follows: 
at rate 1 energy is exchanged
 between the pair $(i, i + 1)$
 by drawing a number $0 \le u \le 1$, independently of everything else, according to the $\text{Beta}(a,b)$ distribution and
a fraction $u y_i$ of  the energy at site $i$ is moved to site $i + 1$ and all other energies remain unchanged.
This is described by the generator $L^{\rightarrow, a,b}_{i,i+1}$ defined as
\begin{align} 
\label{genab}
L^{\rightarrow, a,b}_{i,i+1}f(y)  &=
\int_0^{1}\sigma_{a,b}(u)\left[f(y_1,\ldots,y_i-y_i u,y_{i+1}+y_i u,\ldots,y_N)-f(y)\right] du \nonumber\\
&  = \mathbb{E}\left[f(y_1,\ldots,y_i- y_i U,y_{i+1}+ y_i U,\ldots,y_N)-f(y)\right]\,.
\end{align}
Speeding up the time by a factor $B(a,b)$ and applying the change
of variable $u= \alpha/y_i$, then one can see that
\be
{\mathscr{L}^{\rightarrow}_{i,i+1}}  = \lim_{\substack{a\to 0\\[0.05cm]\,\,\,b\to 2s}} B(a,b) \cdot L^{\rightarrow, a,b}_{i,i+1}\,.
\ee
Thus one gets that ${\mathscr{L}^{\rightarrow}_{i,i+1}}$ corresponds to a process with infinite intensity, which is indeed a pure jump Levy process with Levy measure $\nu(d\alpha)= \frac{1}{\alpha}\left(1-\frac{\alpha}{y_i}\right)^{2s-1} d\alpha$.

\bigskip
\noindent
The description above of the process allows its comparison to other models of energy redistribution.

\paragraph{KMP model.}
The original KMP model \cite{kipnis1982heat} was defined by a uniform
redistribution rule for the sum of the energies
of nearest neighbour  sites and by instantaneous thermalization with reservoirs imposing an exponential (Gibbs) distribution at the boundaries. Here we consider its
generalized version, introduced in \cite{carinci2013duality},
where the redistribution
rule is given in terms of a $\text{Beta}(2s,2s)$
random variable and the 
reservoirs thermalize with Gamma distributions of shape parameter $2s>0$, so that the original KMP 
is recovered for $s=1/2$. Namely, we consider the 
process with generator
\be
\label{eq:gen-kmp}
\mathscr{L}^{\text{KMP}} = {\mathscr{L}_1}^{\text{KMP}} + \sum_{i=1}^{N-1} {\mathscr{L}}_{i,i+1}^{\text{KMP}} + {\mathscr{L}}_N^{\text{KMP}}
\ee
where we have
$$ 
{\mathscr{L}}_{i,i+1}^{\text{KMP}}f(y)  =
\int_0^{1}
\sigma_{2s,2s}(w)\left[f(y_1,\ldots, w(y_i+y_{i+1}), (1-w)(y_i+y_{i+1}),\ldots,y_N)-f(y)\right] dw\,,
$$
while 
\be 
\label{kmpboundary1}
\mathscr{L}^{\text{KMP}}_{1}f(y)
=
 \int_0^{+\infty}{dy_1'}\frac{({\lambda_L})^{2s}}{\Gamma(2s)} (y_1')^{2s-1}e^{- \lambda_L y_1'} \left[f(y_1',\ldots,y_N)-f(y)\right]
\ee  
and 
\be
\label{kmpboundary2}
\mathscr{L}^{\text{KMP}}_{N}f(y)
=
 \int_0^{+\infty}dy_N'\frac{({\lambda_R})^{2s}}{\Gamma(2s)} (y_N')^{2s-1} e^{- \lambda_R y_N'} \left[f(y_1,\ldots,y_N')-f(y)\right]\,.
\ee
Comparing the bulk generators of our process to the
one of the KMP process we see that the generator
\eqref{eq:gen} is associated to a redistribution rule
of the type
\begin{equation*}
(y_i,y_{i+1}) \longrightarrow \left\{
\begin{array}{ll}
\bigg(y_i - y_i U\, ,\, y_{i+1} +  y_i U\bigg) & \text{with probability } \frac12 ,\\
&\\
\bigg(y_i + y_{i+1} V\, ,\,  y_{i+1} - V y_{i+1}\bigg) & \text{with probability } \frac12 ,
\end{array} \right.
\end{equation*}
whereas the KMP process is associated to 
$$
(y_i,y_{i+1}) \longrightarrow \bigg((y_i+y_{i+1})W\, ,\,(y_i+y_{i+1})(1-W) \bigg).
$$
Here $U$ denotes an ``improper'' $\text{Beta}(0,2s)$
distribution, $V$ denotes an independent copy of $U$, and $W$ denotes a $\text{Beta}(2s,2s)$
distribution.

As for the reservoirs, we remark that the boundary generators of our integrable heat transport (\ref{eq:leftbnd}, \ref{eq:rightbnd}) as well as those of the generalized KMP model (\ref{kmpboundary1}, \ref{kmpboundary2}) are both reversible with respect to the Gamma distribution, although they are substantially different. A proof of this can be found in Appendix  \ref{appendixa}.

\paragraph{Immediate exchange model.} The heat conduction of this paper gets its integrable structure from the 
improper Beta random variables used in the 
energy distribution rule.
The process constructed using the generator \eqref{genab} with standard Beta random random variables is known as the ``Immediate Exchange Model'', and it has been studied in the economics literature as a model of wealth redistribution.
See \cite{redig2017generalized,van2016duality} and references therein for the study of its duality properties and \cite{ sasada2015spectral} for the study of its spectral gap problem.

\paragraph{``Harmonic'' model.}
Our model arises as the scaling limit of an interacting particle systems, the
``harmonic'' model introduced in 
\cite{Frassek:2019vjt} , so-called because it involves harmonic numbers. The harmonic model  is integrable and was solved in \cite{frassek2021exact}.
In fact, as we shall prove later,
our model is also integrable and its solution
relies on the fact  that the boundary-driven
harmonic model and our model have {\em the same
absorbing dual process}.

To see the scaling limit, we recall that the harmonic process is an interacting particle system defined by the generator
\be
\label{eq:gen-har}
{\mathscr{L}}^{\text{Har}} =
{\mathscr{L}}^{\text{Har}}_1 +
\sum_{i=1}^{N-1} {\mathscr{L}}^{\text{Har}}_{i,i+1} + {\mathscr{L}}^{\text{Har}}_N\,,
\ee
where the bulk term reads
\begin{equation}\label{eq:gen-bulk2}
{\mathscr{L}^{\text{Har}}_{i,i+1}} = {\mathscr{L}^{{\text{Har}},\rightarrow}_{i,i+1}} + {\mathscr{L}^{{\text{Har}}, \leftarrow}_{i,i+1}}
\end{equation} 
with
$$
({\mathscr{L}^{{\text{Har}},\rightarrow}_{i,i+1}} f)(\eta) = \sum_{k=1}^{\eta_i}\varphi_s(k,\eta_i) \Big[f(\eta-k \delta_i + k \delta_{i+1}) - f(\eta)\Big]  
$$
and
$$
({\mathscr{L}^{{\text{Har}},\leftarrow}_{i,i+1}} f)(\eta) = 
\sum_{k=1}^{\eta_{i+1}} \varphi_s(k,\eta_{i+1}) \Big[f(\eta +k\delta_i - k \delta_{i+1}) - f(\eta)\Big] \,.
$$
Here $\eta\in \N^{N}$ denotes a particle configuration,
with $\eta_i$ being the number of particles at site $i$,
and $\delta_i$ denotes the configuration with only particle placed at site $i$. The rates $\varphi_s: \N\times \N \to \R_+$ are defined by
\be
\label{eq:varphi1}
\varphi_s(k,n) :=  \frac{1}{k}\frac{\Gamma (n+1) \Gamma (n-k+2 s)}{ \Gamma (n-k+1) \Gamma (n+2 s)} 
\mathds{1}_{\{1,2,\ldots,n\}}(k)
\ee
where $\mathds{1}_{A}$ is the indicator function
of the set $A$.
For the boundary terms we have
\be 
(\mathscr{L}^{\text{Har}}_{1}f)(\eta)
=
\sum_{k=1}^{\eta_1} \varphi_s(k,\eta_1) \Big[f(\eta-k\delta_1) - f(\eta)\Big]
 +   
\sum_{k=1}^{\infty} \frac{\beta_L^k}{k} \Big[f(\eta+k\delta_1) - f(\eta)\Big]
\ee  
and 
\be
(\mathscr{L}^{\text{Harm}}_{N}f)(\eta)
=
\sum_{k=1}^{\eta_N} \varphi_s(k,\eta_N) \Big[f(\eta-k\delta_N) - f(\eta)\Big]
 +   
\sum_{k=1}^{\infty} \frac{\beta_R^k}{k} \Big[f(\eta+k\delta_i) - f(\eta)\Big]\;.
\ee
Denote by $\{\eta(t)\,, t\ge 0\}$ the boundary-driven ``harmonic'' process
with generator \eqref{eq:gen-har}. 
Introducing
the scaling paramater $\epsilon >0$, one can show
that the process $\{\epsilon\eta(t)\,, t\ge 0\}$
with $\beta_L,\beta_R$ such that $\epsilon^{-1}\beta_L = \lambda_L(1+o(1))$ and
$\epsilon^{-1}\beta_R = \lambda_R(1+o(1))$
converges (weakly) to the process $\{y(t)\,,\,t\ge 0\}$
with generator \eqref{eq:gen}, as $\epsilon \to 0$, see the Appendix A of \cite{Frassek:2019vjt} for details.
Both for the bulk and for the boundary, the scaling is obtained by using the asymptotics
\begin{equation}
\frac{\Gamma[z+\gamma_1]}{\Gamma[z+\gamma_2]} 
\simeq z^{\gamma_1-\gamma_2} \left(1+O\left(\frac{1}{z}\right)\right) \qquad \text{as } z \to \infty\,.
\end{equation}

\subsection{Equilibrium}
\label{sec2.3}

For all system sizes $N$ and all choices of the reservoirs parameters $\lambda_L,\lambda_R >0$, the Markov process $\{y(t)\,,\,t\ge 0\}$ with generator \eqref{eq:gen} has a unique stationary measure, supported on $\R_+^N$, that we shall denote $\mu_{N}$. We avoid writing the dependence on $\lambda_L,\lambda_R$ to not burden the notation. 

When the system is in an equilibrium set-up, i.e when $\lambda_L = \lambda_R =\lambda$, the heat conduction model has an invariant Gibbs product measure.

{Recall that an invariant distribution for a process is a probability distribution that remains unchanged as time progresses, namely if $y(0) \sim \mu$, then for all $t>0$ $y(t) \sim \mu$. An invariant measure is also reversible if the generator is self-adjoint in $L_2(\R_+^N,\mu)$. More precisely in Appendix \ref{appendixa} we prove Hermiticity when the generator acts on a set of polynomial functions.}

\begin{theorem}[Equilibrium reversible measure]
\label{prop:eq}
If  $\lambda_L = \lambda_R = \lambda$
then the product probability measure with marginal 
the Gamma  distribution with shape
parameter $2s>0$ and rate parameter $\lambda>0$, i.e. 
\be \label{eq:density}
\mu_N(dy) = \prod_{i=1}^N 
\frac{\lambda^{2s}}{\Gamma(2s)} y_i^{2s-1}e^{-\lambda y_i} dy_i, 
\ee
is stationary.
Furthermore it is also  reversible.
\end{theorem}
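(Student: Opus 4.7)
The strategy is to reduce the problem to checking detailed balance separately for each local term in the decomposition $\mathscr{L} = \mathscr{L}_1 + \sum_{i=1}^{N-1} \mathscr{L}_{i,i+1} + \mathscr{L}_N$. Because the candidate invariant measure $\mu_N$ is a product of identical Gamma marginals, it suffices to verify (i) that each bulk operator $\mathscr{L}_{i,i+1}$ is reversible with respect to the two-site marginal $\mu_i(dy_i)\mu_{i+1}(dy_{i+1})$, and (ii) that each boundary operator is reversible with respect to the one-site Gamma marginal at the corresponding end. Stationarity will then follow automatically from reversibility.

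For the bulk, I would write the pointwise detailed balance between the configurations $(y_i, y_{i+1})$ and $(y_i - \alpha, y_{i+1} + \alpha)$. The forward flux combines the two-site Gamma density $y_i^{2s-1} y_{i+1}^{2s-1} e^{-\lambda(y_i+y_{i+1})}$ with the right-moving rate $\tfrac{1}{\alpha}(1-\alpha/y_i)^{2s-1}$ coming from $\mathscr{L}^{\rightarrow}_{i,i+1}$; the reverse flux combines the shifted density $(y_i-\alpha)^{2s-1}(y_{i+1}+\alpha)^{2s-1} e^{-\lambda(y_i+y_{i+1})}$ with the left-moving rate $\tfrac{1}{\alpha}(1-\alpha/(y_{i+1}+\alpha))^{2s-1}$ coming from $\mathscr{L}^{\leftarrow}_{i,i+1}$ evaluated at the shifted point. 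Using the identity $(1-\alpha/y)^{2s-1} = (y-\alpha)^{2s-1}/y^{2s-1}$, together with the fact that the bulk jump preserves the total energy $y_i+y_{i+1}$ so the exponential factor is identical on both sides, both fluxes collapse to the same symmetric expression $\tfrac{1}{\alpha}(y_i - \alpha)^{2s-1} y_{i+1}^{2s-1} e^{-\lambda(y_i + y_{i+1})}$.

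For the boundary term $\mathscr{L}_1$ (and analogously $\mathscr{L}_N$), the natural pairing is between the removal move $y_1 \to y_1 - \alpha$ and the insertion move $y_1 - \alpha \to y_1$. The forward flux is $y_1^{2s-1} e^{-\lambda y_1}\cdot \tfrac{1}{\alpha}(1-\alpha/y_1)^{2s-1}$, while the reverse flux is $(y_1-\alpha)^{2s-1} e^{-\lambda(y_1-\alpha)}\cdot \tfrac{1}{\alpha} e^{-\lambda \alpha}$, and a short simplification using the same identity as above shows that the two coincide precisely because the Gamma rate parameter equals the exponential decay rate $\lambda$ of the reservoir Lévy kernel. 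This is exactly where the equilibrium assumption $\lambda_L = \lambda_R = \lambda$ enters.

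The only delicate point is that the jump kernel carries a non-integrable singularity $1/\alpha$ at $\alpha = 0$, so the process is of Lévy type rather than compound Poisson, and detailed balance has to be phrased as the symmetry of the bilinear form $(f,g) \mapsto \int \mu_N(dy)\, f(y)(\mathscr{L} g)(y)$ on an appropriate algebra of test functions — polynomials, as in Appendix~\ref{appendixa}. On this class the $\alpha=0$ singularity is tamed by the vanishing $g(\ldots,y_i-\alpha,\ldots) - g(y) = O(\alpha)$ of the increments inside the generator, so the pointwise flux identities above integrate to yield genuine Hermiticity of $\mathscr{L}$ in $L^2(\R_+^N,\mu_N)$, which in turn gives reversibility and stationarity of $\mu_N$.
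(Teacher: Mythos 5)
Your proof is correct, and the reversibility part is in substance the same computation as the paper's Appendix~\ref{appendixa}: there the authors show $\langle \mathscr{L}_{i,i+1}f,g\rangle=\langle f,\mathscr{L}_{i,i+1}g\rangle$ and $\langle \mathscr{L}_{1}f,g\rangle=\langle f,\mathscr{L}_{1}g\rangle$ by Fubini and the change of variables $z_i=y_i-\alpha$, $z_{i+1}=y_{i+1}+\alpha$ (resp.\ $y_1=z_1+\alpha$ at the boundary), which is exactly your pointwise flux pairing of the right-jump kernel with the left-jump kernel (using $(1-\alpha/y)^{2s-1}=(y-\alpha)^{2s-1}/y^{2s-1}$ and conservation of $y_i+y_{i+1}$), and of the removal kernel with the insertion kernel at the reservoir, where $\lambda_L=\lambda$ enters; your correctly noted cancellation between the two halves of each local generator (rather than each half being symmetric on its own) mirrors their recombination of the four integrals. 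Where you genuinely diverge from the paper is stationarity: the paper does \emph{not} obtain it from reversibility but from duality (Section~\ref{sec3.2}), computing $\mathbb{E}_{\mu_N}[D(y,\xi)]=\lambda^{-|\xi|}$ via the absorption probabilities and recognizing the moments of the Gamma product, whereas you get it for free by taking $g\equiv 1$ in the symmetric form, since $\mathscr{L}1=0$. Your route is more elementary and self-contained (no absorption probabilities, no moment identification needed), at the price of resting entirely on the formal Hermiticity on polynomials; the paper's route buys a duality-based argument it needs anyway for Theorem~\ref{theo:main}, and both treatments share the same unaddressed domain/core caveat, which you flag appropriately when noting that the $1/\alpha$ singularity is only tamed after the $O(\alpha)$ increments are kept together inside the bilinear form.
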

Stationarity will be deduced as
a consequence of duality in Section~\ref{sec:proofs}. 
Reversibility is proven in  Appendix~\ref{appendixa}
by a direct computation showing that the generator \eqref{eq:gen}
is self-adjoint in the Hilbert space
$L_2(\R_+^N,\mu_N)$.

\subsection{Non-equilibrium steady state}
\label{sec2.4}

In a non-equilibrium set-up, i.e. when 
$\lambda_L\neq \lambda_R$, reversibility is lost and the stationary measure $\mu_{N}$ is non-product.
Our main result is the {computation of all multivariate moment of the stationary measure  in non-equilibrium.   }

\begin{theorem}[Non-equilibrium steady state]
\label{theo:main}
Define the left and right ``reservoir temperatures'' as
\be
\label{eq:rho}
T_L = \frac{1}{\lambda_L} \qquad\text{and}\qquad T_R = \frac{1}{\lambda_R}.
\ee
Then we have:
\begin{itemize}
\item 
For a multi-index $\xi=(\xi_1,\ldots,\xi_N)\in\N_0^N$, the multivariate moments of the stationary measure 
of the process generated by \eqref{eq:gen} are given by
\begin{equation}
\label{eq:moments}
\bigint_{\R_+^N}\left[\prod_{i=1}^N y_i^{\xi_i}  \frac{\Gamma(2s)}{\Gamma(2s+\xi_i)}\right]
\mu_{N}(dy)
=\sum_{n=0}^{|{ \xi}|}T_R^{|{ \xi}|-n}(T_L-T_R)^n g_{\xi}(n)
\end{equation} 
where $|\xi| = \sum_{i=1}^N \xi_i$ and 
\begin{equation}\label{eq:gfunc}
 g_{\xi}(n)=
\sum_{{\underset{n_1+\ldots+n_{\scalebox{0.5} N}=n}{(n_1,\ldots,n_{\scalebox{0.5} N}) \in \N_0^{\scalebox{0.5} N}}}}
\prod_{i=1}^N  {\binom{\xi_i}{n_i}} \prod_{j=1}^{n_i}\frac{2s(N+1-i)-j+\sum_{k=i}^{N}n_k}{2s(N+1)-j+\sum_{k=i}^{N}n_k}\,.
\end{equation}

\item Equivalently, interpreting the multi-index $\xi=\sum_{i=1}^N \xi_i \delta_i$ as a configuration of (dual) particles  located at positions $(x_k)_{1\le k \le |\xi|}$ with $x_1\le x_2\le \ldots \le  x_{|\xi|}$, we have
\begin{equation}
\label{ed}
\int_{\R_+^N}\left[\prod_{i=1}^N y_i^{\xi_i} \frac{\Gamma(2s)}{\Gamma(2s+\xi_i)}\right]
\mu_{N}(dy)
= \sum_{n=0}^{|\xi|}T_R^{|  \xi|-n}(T_L-T_R)^n\
g_{x}(n)
\end{equation} 
with
\begin{equation}
\label{eq: express-two}
 g_{x}(n)=
\sum_{1\leq i_1< \ldots< i_n\leq |  \xi|} \; \prod_{\alpha=1}^n\frac{n-\alpha+2s(N+1-x_{i_\alpha})}{n-\alpha+2s(N+1)}\,.
\end{equation} 

\end{itemize}

\end{theorem}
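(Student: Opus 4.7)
The plan is to derive Theorem~\ref{theo:main} from a Markov duality between the heat conduction process $\{y(t)\}$ and an absorbing particle system living on the extended chain $\{0,1,\ldots,N,N+1\}$, where $0$ and $N+1$ are absorbing sites. I would take as duality function
$$D(y,\xi) = T_L^{\xi_0}\,T_R^{\xi_{N+1}}\prod_{i=1}^N y_i^{\xi_i}\frac{\Gamma(2s)}{\Gamma(2s+\xi_i)},$$
with $\xi=(\xi_0,\xi_1,\ldots,\xi_N,\xi_{N+1})\in\N_0^{N+2}$, and verify the intertwining $(\mathscr{L} D(\cdot,\xi))(y) = (\mathscr{L}^{\text{dual}}D(y,\cdot))(\xi)$. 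The dual generator $\mathscr{L}^{\text{dual}}$ should turn out to be the absorbing version of the harmonic generator \eqref{eq:gen-har}, with bulk rates $\varphi_s(k,n)$ from \eqref{eq:varphi1} and boundary terms that freeze dual particles once they jump onto sites $0$ or $N+1$. This exploits the observation, stated in the excerpt, that our model and the harmonic model share the same absorbing dual.

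The verification is a direct computation. For each bulk bond $(i,i+1)$ one expands $(y_i-\alpha)^{\xi_i}(y_{i+1}+\alpha)^{\xi_{i+1}}$ by the binomial theorem, after which the $\alpha$-integrals against $\alpha^{-1}(1-\alpha/y_i)^{2s-1}\,d\alpha$ become Euler Beta integrals evaluating to explicit Gamma ratios; after factoring $y_i^{\xi_i}y_{i+1}^{\xi_{i+1}}$ and reorganizing, these ratios reproduce exactly the rates $\varphi_s(k,\xi_i)$ multiplying the finite differences $D(y,\xi-k\delta_i+k\delta_{i+1})-D(y,\xi)$. The same expansion applied to \eqref{eq:leftbnd}--\eqref{eq:rightbnd} produces absorption jumps from sites $1$ and $N$ to the boundary sites $0$ and $N+1$, while the integrals against $e^{-\lambda_L\alpha}\,d\alpha/\alpha$ and $e^{-\lambda_R\alpha}\,d\alpha/\alpha$ recombine into the geometric-series generating factors $T_L^{\xi_0}$ and $T_R^{\xi_{N+1}}$. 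As a byproduct, stationarity of the product Gamma law for $\lambda_L=\lambda_R$ (Theorem~\ref{prop:eq}) follows by taking $\xi\equiv 0$.

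Once the duality is established, applying it to the stationary measure $\mu_N$ with $\xi$ initially supported in $\{1,\ldots,N\}$ gives, for every $t\ge 0$,
$$\int D(y,\xi)\,\mu_N(dy) = \mathbb{E}_\xi\!\left[\int D(y,\xi(t))\,\mu_N(dy)\right].$$
Since the dual is absorbing, $\xi(t)$ converges almost surely to a terminal configuration with $N_L$ particles at site $0$ and $N_R=|\xi|-N_L$ particles at site $N+1$; sending $t\to\infty$ and using the prefactors in $D$ yields
$$\int D(y,\xi)\,\mu_N(dy) = \mathbb{E}_\xi\!\left[T_L^{N_L}\,T_R^{|\xi|-N_L}\right].$$
Writing $T_L^{N_L}=((T_L-T_R)+T_R)^{N_L}$, expanding by the binomial theorem, and swapping the order of summation produces exactly the form \eqref{eq:moments} with the identification
$$g_\xi(n) = \mathbb{E}_\xi\!\left[\binom{N_L}{n}\right].$$

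It remains to identify this binomial moment with the explicit product \eqref{eq: express-two}. Labelling the $|\xi|$ dual particles according to their initial positions $x_1\le\cdots\le x_{|\xi|}$, the combinatorial identity $\binom{N_L}{n}=\sum_{\{i_1<\cdots<i_n\}}\mathds{1}\{\text{particles }i_1,\ldots,i_n \text{ left-absorbed}\}$ gives
$$g_\xi(n) = \sum_{1\le i_1<\cdots<i_n\le |\xi|} \mathbb{P}_\xi\!\left(\text{particles }i_1,\ldots,i_n \text{ all absorbed at site }0\right).$$
Since the absorbing dual coincides with that of the boundary-driven harmonic model, these joint absorption probabilities are precisely the ones computed in closed form in \cite{frassek2021exact}, and they equal $\prod_{\alpha=1}^n (n-\alpha+2s(N+1-x_{i_\alpha}))/(n-\alpha+2s(N+1))$. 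This establishes \eqref{ed}; the equivalent statement \eqref{eq:moments}--\eqref{eq:gfunc} follows by reorganizing the sum over ordered positions according to multiplicities $\xi_i$.

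\textbf{Main obstacle.} The technically delicate step is the direct verification of duality, where one must match the continuous Lévy-type kernels $\alpha^{-1}(1-\alpha/y)^{2s-1}$ and the exponential reservoir weights $e^{-\lambda\alpha}\alpha^{-1}$ onto the discrete harmonic-model rates $\varphi_s(k,n)$ and the geometric prefactors $T_{L,R}^{\xi_{0,N+1}}$. The combinatorial identities needed to reassemble the Beta integrals into the correct dual action are nontrivial; the shortcut, developed in Section~\ref{sec4}, is to derive the duality function and the intertwining algebraically from the open non-compact Heisenberg chain structure, thereby bypassing these direct integral computations.
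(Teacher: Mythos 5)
Your proposal follows essentially the same route as the paper: prove duality with exactly this duality function by a direct computation (binomial expansion plus Beta/digamma integrals), obtaining the same absorbing dual, and then combine the stationary-duality identity with the absorption probabilities of that dual imported from \cite{frassek2021exact}. The only cosmetic difference is that you phrase the coefficients as binomial moments $\mathbb{E}_\xi\bigl[\binom{N_L}{n}\bigr]$ decomposed into tagged-particle joint absorption probabilities, whereas the paper simply invokes the closed-form expression of \cite{frassek2021exact} directly; note that the term-by-term identification of each product with a joint absorption probability of specific tagged particles is a stronger claim than what you need (and than what is literally being cited), so it is safer to cite the aggregate formula.
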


\begin{remark}[Recovering the equilibrium case]
If one chooses equal reservoir parameters  in Theorem~\ref{theo:main} then one finds the moments of the equilibrium stationary measure of Theorem~\ref{prop:eq}.
Indeed, setting $T_L = T_R = T$, only the term $n=0$
survives in the right hand side of \eqref{eq:moments} yielding
$$
\bigint_{\R_+^N}\left[\prod_{i=1}^N y_i^{\xi_i}  \frac{\Gamma(2s)}{\Gamma(2s+\xi_i)}\right]
\mu_{N}(dy)
=T^{|\xi|}.
$$
\end{remark}

\begin{remark}[Long-range correlations]
Considering the dual configuration with
just one dual particle at site $i$
one deduces a microscopic linear
profile
\be
\label{eq:linear-micro}
\int_{\R_+^N} y_i \mu_N(dy) = 2s\Big(T_L + \frac{i}{N+1}(T_R -T_L)\Big) \;.
\ee
The expression for the multivariate
moments given in  \eqref{ed},
  \eqref {eq: express-two} implies
  long-range correlations.
  For instance the covariance between the energies at two sites $i<j$
read
\begin{eqnarray}
\label{eq:cov-micro}
&&\int_{\R_+^N} y_i y_j \mu_N(dy) - 
\Big(\int_{\R_+^N} y_i \mu_N(dy)\Big)
\Big(\int_{\R_+^N} y_j \mu_N(dy)\Big)
\\
&& \qquad\qquad\qquad\qquad = 
(2s)^2(T_L-T_R)^2 \frac{i}{(N+1)^2} \frac{N+1-j}{(2s(N+1)+1)} \nonumber
\end{eqnarray}
More generally, considering
the cumulants $\kappa_n$
of $n\ge 2$, the energies at the microscopic points $i_1<\ldots < i_n$, for large $N$ behave as 
\be
\lim_{N\to\infty}N^{n-1}\kappa_n = f_n(u_1,\ldots,u_n)(T_R-T_L)^n\,,
\ee
where $i_k = \lfloor N u_k \rfloor$ and appropriate functions $f_n$ (the first being
$
f_2(u_1,u_2) = (2s)u_1(1-u_2)
)$.

\end{remark}

Although Theorem~\ref{theo:main}
shows that in non-equilibrium 
the stationary measure is not
a Gibbs measure, it is possible to show that in the thermodynamic limit $N\to\infty$ 
the non-equilibrium stationary measure approaches {\em locally} a Gibbs distribution, i.e. the 
integrable heat conduction model
satisfies {\em local equilibrium}. 
Furthermore, transport of energy across the system satisfies Fourier's law.

To formalize this, let $O$ be the algebra of cylindrical {bounded continuous functions on ${\mathbb{R}_{+}^\mathbb{N}}$, i.e. a function which depends on the configuration $y$ only through a finite number of variables $y_i \in \mathbb{R}_+$.}
Denote by $\tau_i$ the translation by $i$, i.e. for all function $f\in O$
define $(\tau_i f)(j) = f(i+j)$.
For a measure $\mu$, we denote
by $\mathbb{E}_{\mu}(\cdot)$ the
expectation under this measure.

\begin{theorem}[Local equilibrium \& Fourier's law] 
\label{fourier}
Let $T_L,T_R$ be defined by 
\eqref{eq:rho} and let $\mu_N$ be the unique invariant  measure of the process generated by \eqref{eq:gen}.
Then the following holds: 
\begin{itemize}
\item[(i)]
For $u\in(0,1)$
\label{theo:loc}
\be
\label{theo:loc33}
\lim_{N\to\infty} \mathbb{E}_{\mu_N}(\tau_{[u N]} f) = \mathbb{E}_{\nu_{\lambda(u)}}(f) \qquad\qquad \forall \, f \in O
\ee
where $[x]$ denotes the integer part of $x\in\R$,
$\lambda(u) = 1/T(u)$ and
\be
\label{eq:linear}
T(u) = T_L + (T_R-T_L) u \,,
\ee
and
$\nu_{\lambda}$ is the product measure supported on $\R_+^{\N}$ with marginals
given by a Gamma distributions with shape parameter $2s>0$
and rate parameter $\lambda>0$
\be \label{eq:density22}
\nu_{\lambda}(dy) = \prod_{i=1}^{\infty} 
\frac{\lambda^{2s}}{\Gamma(2s)} y_i^{2s-1}e^{-\lambda y_i} dy_i\;.
\ee
\item[(ii)]
Define the stationary energy flux between
two neighbor sites $i,i+1$ by
\be
\label{eq:currbond}
J_{i,i+1} =  \int_{{\mathbb{R}_{+}^N}} \mu_N(dy)\, \big(y_i - y_{i+1}\big),
\ee
and the total stationary current in the thermodynamic limit as
\be
J = \lim_{N\to\infty} \sum_{i=1}^{N-1} J_{i,i+1}.
\ee
Then Fourier's law holds, namely
\be
\label{eq:fourier}
J = - K_s\frac{dT(u)}{du},  \qquad\qquad u\in(0,1),
\ee
where the conductivity $K_s = 2s$ and the macroscopic temperature profile $T(\cdot)$ is defined by \eqref{eq:linear}.
\end{itemize}
\end{theorem}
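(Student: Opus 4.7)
The strategy is to derive both claims directly from the explicit moment formula of Theorem~\ref{theo:main}. For part (i) I would use the classical method of moments: each marginal of $\nu_{\lambda(u)}$ is a Gamma law whose Laplace transform is analytic near the origin, so the finite-dimensional marginals of the product Gamma are moment-determined, and proving convergence of every mixed polynomial moment on a fixed finite block of coordinates is enough to establish the weak convergence \eqref{theo:loc33}. Fix shifts $j_1<\ldots<j_k$ and exponents $a_1,\ldots,a_k$, and apply Theorem~\ref{theo:main} to the multi-index $\xi$ with $\xi_{[uN]+j_m}=a_m$. All ordered dual positions $x_{i_\alpha}$ then satisfy $x_{i_\alpha}/N\to u$, so that the basic ratio in \eqref{eq: express-two} obeys
\begin{equation*}
\frac{n-\alpha+2s(N+1-x_{i_\alpha})}{n-\alpha+2s(N+1)}\;\longrightarrow\;1-u
\end{equation*}
uniformly in $\alpha$, giving $g_x(n)\to\binom{|\xi|}{n}(1-u)^n$. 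Plugging this into \eqref{ed} and recognising a binomial expansion collapses the right-hand side to $\bigl(T_R+(T_L-T_R)(1-u)\bigr)^{|\xi|}=T(u)^{|\xi|}$; undoing the normalisation $\Gamma(2s)/\Gamma(2s+\xi_i)$ identifies this with the desired product-Gamma moment at rate $\lambda(u)=1/T(u)$.

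For part (ii) I would specialise Theorem~\ref{theo:main} to the single-particle multi-index $\xi=\delta_i$, recovering the linear microscopic profile \eqref{eq:linear-micro}. Inserting this into \eqref{eq:currbond} yields
\begin{equation*}
J_{i,i+1}=-\frac{2s(T_R-T_L)}{N+1},
\end{equation*}
independent of $i$ (consistent with stationarity). Hence $\sum_{i=1}^{N-1}J_{i,i+1}$ converges to $2s(T_L-T_R)$ as $N\to\infty$. Since $dT/du=T_R-T_L$ for the profile \eqref{eq:linear}, this matches $-K_s\,dT/du$ with $K_s=2s$, proving \eqref{eq:fourier}.

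The main delicate point, beyond the routine limit calculations above, is the passage from moment convergence to the weak convergence against all bounded continuous cylindrical $f$ required by \eqref{theo:loc33}. This rests on two inputs: the moment-determinacy of the target product Gamma noted above, together with tightness of the sequence of shifted finite-dimensional marginals of $\mu_N$, the latter being immediate from the uniform-in-$N$ bounds on all polynomial moments provided by Theorem~\ref{theo:main}. The standard method-of-moments theorem then concludes the proof of part (i).
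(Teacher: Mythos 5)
Your proposal is correct and follows essentially the same route as the paper: shift the dual configuration by $[uN]$, use \eqref{eq: express-two} to get $g_{x^u}(n)\to\binom{|\xi|}{n}(1-u)^n$, collapse \eqref{ed} by the binomial theorem to $T(u)^{|\xi|}$ for part (i), and for part (ii) read off the constant bond current $-2s(T_R-T_L)/(N+1)$ from the linear profile \eqref{eq:linear-micro}. The only addition is your explicit moment-determinacy and tightness argument upgrading moment convergence to weak convergence against all bounded continuous cylinder functions, a step the paper leaves implicit (its proof concludes \eqref{theo:loc33} only for polynomial $f$), so your write-up is, if anything, slightly more complete.
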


\section{Proofs of the results} \label{sec:proofs}
In this section we prove the theorems stated previously in Section~\ref{sec2}. The main
ingredient 
used is the 
duality relation which is discussed below. More precisely, we will show that the open heat conduction model is dual to a purely absorbing particle system. This dual process  is the same dual process obtained for the boundary-driven ``harmonic'' model  in \cite{frassek2021exact} but, as we will see,  with a different duality function.

 The absorbing dual process and the associated duality function
are identified in Section
\ref{sec3.1}. 
Once the duality result is established, we can  use the
expression for the absorption probabilities of the dual particles
that were obtained in \cite{frassek2021exact} to prove Theorem~\ref{theo:main}, see Section~\ref{sec3.3}.
The case of equilibrium, i.e. Theorem~\ref{prop:eq}, does
not require the explicit form
of the absorption probabilities
and is treated separately in Section~\ref{sec3.2}.

Theorem~\ref{fourier} about local equilibrium and Fourier's law will be proved in Section~\ref{sec3.4}.
Contrary to the usual situation of non-integrable models where one needs a coupling argument with independent particles, here
local equilibrium is obtained by a direct computation that uses the explicit form of the multi-point correlation function
of Theorem~\ref{theo:main}.

The fact that the heat conduction model studied in this paper and the boundary-driven `harmonic' model considered in \cite{frassek2021exact}
have the same dual process
can be explained by the fact
that the two models both arise
from the integrable open XXX Heisenberg spin chain when considering different representations of the non-compact
$\mathfrak{sl}(2)$ Lie algebra. This is discussed in Section~\ref{sec4}.

\subsection{The associated dual process}
\label{sec3.1}

We introduce the Markov process obtained in \cite{frassek2021exact} which is related via duality to the one of the heat conduction model in \eqref{eq:gen}.

\begin{definition}[Dual absorbing process]
\label{def:processdual}
For $s>0$, let the function $\varphi_s : \N \times \N \to \R_+$ 
\be
\label{eq:varphi}
\varphi_s(k,n) :=  \frac{1}{k}\frac{\Gamma (n+1) \Gamma (n-k+2 s)}{ \Gamma (n-k+1) \Gamma (n+2 s)} 
\mathds{1}_{\{1,2,\ldots,n\}}(k)
\ee
where $\mathds{1}_{A}$ is the indicator function
of the set $A$.
We consider the Markov chain $\{\xi(t) 
\,:\, t\ge 0\}$ on $\N^{N+2}$
whose generator ${\mathscr L}^{\mathrm{dual}}$,
acting on functions $f : \N^{N+2}\to\R$, is given by
\be
\label{eq:gendual}
{\mathscr L}^{\mathrm{dual}} =
{\mathscr L}^{\mathrm{dual}}_{0,1}
+
\sum_{i=1}^{N-1}{\mathscr L}^{\mathrm{dual}}_{i,i+1} 
+ 
{\mathscr L}^{\mathrm{dual}}_{N,N+1}\,,
\ee
where for $i \in \{1,2,\ldots N-1\}$ 
\begin{eqnarray}
\label{eq:genbulk}
({\mathscr L}^{\mathrm{dual}}
_{i,i+1} f)(\xi) 
& = & 
\sum_{k=1}^{\xi_i}\varphi_s(k,\xi_i) \Big[f(\xi-k \delta_i + k \delta_{i+1}) - f(\xi)\Big] \\
& + & 
\sum_{k=1}^{\xi_{i+1}} \varphi_s(k,\xi_{i+1}) \Big[f(\xi+k\delta_i - k \delta_{i+1}) - f(\xi)\Big]\,, \nonumber
\end{eqnarray}
while 
\begin{eqnarray}
\label{eq:gendualbdry1}
{\mathscr L}^{\mathrm{dual}}
_{0,1} f(\xi) 
& = &  
\sum_{k=1}^{\xi_1} \varphi_s(k,\xi_1) \Big[f(\xi-k\delta_1 + k \delta_0) - f(\xi)\Big]\,,
\end{eqnarray}
and
\begin{eqnarray}
\label{eq:gendualbdry2}
{\mathscr L}^{\mathrm{dual}}
_{N,N+1} f(\xi) 
& = &  
\sum_{k=1}^{\xi_N} \varphi_s(k,\xi_N) \Big[f(\xi-k\delta_N + k \delta_{N+1}) - f(\xi)\Big]\,.
\end{eqnarray}
\end{definition}
The Markov process $\{\xi(t) \,:\, t\ge 0\}$,
initialized from a configuration $\xi\in\N^{N+2}$ 
describes the motion of $|\xi| = \xi_0+\ldots + \xi_{N+1}$ indistinguishable particles which move on the lattice sites $\{0,\ldots, N+1\}$
and are absorbed at the boundary sites $0$ and $N+1$, i.e. they cannot reenter the chain with the lattice sites $\{1,\ldots, N\}$. Eventually, all the $|\xi|$ particles  
get absorbed at the boundary sites.

\begin{theorem}[Duality] \label{duality-thm}
Let $D: \R_+^N\times \N^{N+2}$ be the duality function defined by
\be\label{dualityfunction}
D(y,\xi) =\left( \dfrac{1}{\lambda_L}\right) ^{\xi_0} \left(\prod_{i=1}^N d(y_i,\xi_i) \right) \left( \dfrac{1}{\lambda_R}\right) ^{\xi_{N+1}} 
\ee
with
\be
d(y_i,\xi_i) = y_i^{\xi_i}  \frac{\Gamma(2s)}{\Gamma(2s+\xi_i)}.
\ee
Then for every $t\ge 0$ and for all $(y,\xi)\in\R_+^N\times \N^{N+2}$ one has
\be
\mathbb{E}_{y}\Big[D\Big(y(t),\xi\Big)\Big]
=
\mathbb{E}_{\xi}\Big[D\Big(y,\xi(t)\Big)\Big],
\ee
where on the left hand side the expectation is 
w.r.t. the process $\{y(t)\,,\,t\ge 0\}$
initialized from $y$ and on the right hand side
the expectation is w.r.t. the process $\{\xi(t)\,,\,t\ge 0\}$
initialized from $\xi$.

\end{theorem}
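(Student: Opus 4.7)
The plan is to reduce Theorem~\ref{duality-thm} to the infinitesimal identity
\begin{equation*}
(\mathscr{L}\, D(\cdot,\xi))(y) = (\mathscr{L}^{\mathrm{dual}}\, D(y,\cdot))(\xi) \qquad \forall\,(y,\xi)\in\R_+^N\times\N^{N+2},
\end{equation*}
which then lifts to the semigroup by the standard interpolation argument: the map $s\mapsto \mathbb{E}_y\mathbb{E}_\xi[D(y(s),\xi(t-s))]$ for $s\in[0,t]$ has vanishing derivative (by the infinitesimal identity) and therefore equates its endpoints $\mathbb{E}_\xi[D(y,\xi(t))]$ and $\mathbb{E}_y[D(y(t),\xi)]$. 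The moment bounds required are ensured by the fact that $D(\cdot,\xi)$ is a polynomial of fixed degree $|\xi|$ in $y$, while $D(y,\cdot)$ decays geometrically in $\xi_0,\xi_{N+1}$ through the factors $\lambda_L^{-\xi_0}\lambda_R^{-\xi_{N+1}}$, and both corresponding function spaces are invariant under the respective dynamics.

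To prove the infinitesimal identity, I exploit the factorization of $D$ over sites and the decomposition of both generators into local pieces: it suffices to verify it separately on each bulk bond $(i,i+1)$ and at the two boundary pairs $(0,1)$ and $(N,N+1)$. For a bulk bond I further split $\mathscr{L}_{i,i+1}=\mathscr{L}^{\rightarrow}_{i,i+1}+\mathscr{L}^{\leftarrow}_{i,i+1}$ and similarly the dual generator into its two hopping directions, and match the primal \emph{rightward} piece with the dual \emph{leftward} piece, as is typical in Markov duality; the opposite pairing follows by the symmetry $i\leftrightarrow i+1$.

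The central bulk calculation is the action of $\mathscr{L}^{\rightarrow}_{i,i+1}$ on $y_i^a y_{i+1}^b$: after the substitution $u=\alpha/y_i$ and the expansion $(y_{i+1}+uy_i)^b=\sum_{k=0}^b \binom{b}{k}y_{i+1}^{b-k}(uy_i)^k$, the integral reduces to the standard Beta evaluations $\int_0^1 u^{k-1}(1-u)^{a+2s-1}\,du=\Gamma(k)\Gamma(a+2s)/\Gamma(k+a+2s)$ for $k\geq 1$, together with the regularized integral
\begin{equation*}
\int_0^1 \frac{(1-u)^{a+2s-1}-(1-u)^{2s-1}}{u}\,du = -\sum_{j=0}^{a-1}\frac{1}{2s+j},
\end{equation*}
which absorbs the $u=0$ singularity coming from the subtraction $-f(y)$. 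After including the normalizing factors from $D$, the matching of the off-diagonal coefficients of $y_i^{a+k}y_{i+1}^{b-k}$ ($k\geq 1$) reduces to the algebraic identity $\binom{b}{k}\Gamma(k)\Gamma(2s+b-k)=\varphi_s(k,b)\Gamma(2s+b)$, immediate from \eqref{eq:varphi}, while the matching of the diagonal coefficient of $y_i^a y_{i+1}^b$ (once $\mathscr{L}^{\rightarrow}$ and $\mathscr{L}^{\leftarrow}$ are combined) amounts to the sum rule $\sum_{k=1}^m \varphi_s(k,m)=\sum_{j=0}^{m-1}1/(2s+j)$, proved by evaluating $\int_0^1 (1-u)^{2s-1}[(1-u)^m-1]/u\,du$ via the substitution $v=1-u$ and a geometric-series expansion.

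The boundary identities follow the same scheme. At the left boundary, the reservoir injection integral acting on $y_1^a$ produces $\sum_{k\geq 1}\binom{a}{k}\Gamma(k)\lambda_L^{-k}y_1^{a-k}$, which matches the dual hops from site $1$ into the absorbing site $0$: the additional factor $\lambda_L^{-k}$ produced by $D$ when $\xi_0\to\xi_0+k$ is exactly what the primal integral supplies, and the rate identification reduces to $\binom{a}{k}\Gamma(k)\Gamma(2s)/\Gamma(2s+a)=\varphi_s(k,a)\Gamma(2s)/\Gamma(2s+a-k)$; the extraction integral $\int_0^{y_1}\!\cdots$ in $\mathscr{L}_1$ is handled exactly as in the bulk and generates, jointly with its dual counterpart, the same sum-rule identity. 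The main technical obstacle of the proof is thus the careful regularization of the $\alpha=0$ singularity on the primal side and the corresponding sum rule for $\sum_k\varphi_s(k,m)$; everything else is transparent Gamma/Beta bookkeeping and careful tracking of the direction reversal induced by duality.
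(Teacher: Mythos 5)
Your proposal is correct and follows essentially the same route as the paper: reduce to the generator-level identity $(\mathscr{L}D(\cdot,\xi))(y)=(\mathscr{L}^{\mathrm{dual}}D(y,\cdot))(\xi)$, verify it bond-by-bond by pairing the primal rightward piece with the dual leftward piece, evaluate the off-diagonal terms with Beta integrals (your identity $\binom{b}{k}\Gamma(k)\Gamma(2s+b-k)=\varphi_s(k,b)\Gamma(2s+b)$ is exactly the paper's reconstruction of $\varphi_s$), and handle the regularized $\alpha\to 0$ part through the sum rule $\sum_{k=1}^m\varphi_s(k,m)=\psi(m+2s)-\psi(2s)$, with the residual diagonal terms cancelling only after the two hopping directions are added; the boundary terms are treated the same way. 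The only cosmetic difference is that you spell out the standard semigroup interpolation step that the paper leaves implicit.
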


\begin{proof}
It is enough to prove that, for all $y\in \R_+^N$ and $\xi\in\N^{N+2}$, it holds
\begin{equation}
\label{duality}
\Big({\mathscr{L}}D(\cdot,\xi)\Big)(y) = 
\Big({\mathscr{L}}^{\mathrm{dual}}D(y,\cdot)\Big)(\xi).
\end{equation}
To prove this we treat  each term
appearing in the sum defining the generators separately. 
Let's start with the bulk part and  consider the local generator acting on the bond $(i,i+1)$. Its right action  on the duality function \eqref{dualityfunction} is
\begin{align}
\label{bulk-massage}
& \Big({\mathscr{L}^{\rightarrow}_{i,i+1}}D(\cdot , \xi)\Big)(y) =  T_L^{\xi_0} \left( \prod_{j\neq \{i,i+1\}} d(y_j,\xi_j)\right) T_R^{\xi_{N+1}} \cdot\\&
\qquad\qquad \int_0^{y_i} \dfrac{d\alpha}{\alpha} \left(1-\frac{\alpha}{y_i}\right)^{2s-1}\left[ \left( y_{i} - \alpha \right)^{\xi_{i}}  \left( y_{i+1} + \alpha \right)^{\xi_{i+1}} - y_{i}^{\xi_{i}}y_{i+1}^{\xi_{i+1}}  \right] \dfrac{\Gamma(2s)\Gamma(2s)}{\Gamma(\xi_{i} + 2s) \Gamma(\xi_{i+1} + 2s)}.\nonumber
\end{align}
We focus on the integral on the right hand side of the above display.
Using the change of variable $\alpha \to \alpha/y_i$, the integral can be rewritten as
$$
\int_0^{1} \dfrac{d\alpha}{\alpha} \left(1-\alpha\right)^{2s-1}\left[ \left( y_{i} - \alpha y_i \right)^{\xi_{i}}  \left( y_{i+1} + \alpha y_i \right)^{\xi_{i+1}} - y_{i}^{\xi_{i}}y_{i+1}^{\xi_{i+1}}  \right].
$$
Next, expanding the term {$ \left( y_{i+1} + \alpha y_{i} \right)^{\xi_{i+1}} $ with the Newton binomial we get
$$
\int_{0}^{1}\dfrac{d\alpha}{\alpha} (1-\alpha)^{2s-1} \Biggl(  \ 
 \sum_{\ell=0}^{\xi_{i+1}}\binom{\xi_{i+1}}{\ell} y_{i}^{\xi_i+\ell} y_{i+1}^{\xi_{i+1} - \ell} \alpha^{\ell} (1-\alpha)^{\xi_i}
- y_{i}^{\xi_i} y_{i+1}^{\xi_{i+1}}
 \Biggl) \,.
$$
}
Furthermore, using the integral representation of the Beta function 
\be\label{eq:intbeta}
B(a,b) =   \int_{0}^{1}  \alpha^{a-1} (1- \alpha)^{b-1} d\alpha
\ee 
and isolating the term $\ell=0$ in the summation
we arrive to
\begin{align*}
& 
\int_0^{y_i} \dfrac{d\alpha}{\alpha} \left(1-\frac{\alpha}{y_i}\right)^{2s-1}\left[ \left( y_{i} - \alpha \right)^{\xi_{i}}  \left( y_{i+1} + \alpha \right)^{\xi_{i+1}} - y_{i}^{\xi_{i}}y_{i+1}^{\xi_{i+1}}  \right] = \\
&  
\qquad\qquad\Biggl( \, \sum_{\ell=1}^{\xi_{i+1}}  \binom{\xi_{i+1}}{\ell} y_{i}^{\xi_{i} + \ell} y_{i+1}^{\xi_{i+1} - \ell}   B(\ell,\xi_{i}+2s) \Biggl) \,+\,  
y_{i}^{\xi_i} y_{i+1}^{\xi_{i+1}}  \int_{0}^{1}  d\alpha\, \alpha^{-1} (1-\alpha)^{2s-1} \left[ (1 - \alpha)^{\xi_i}  -1  \right] .    
\end{align*}
Inserting this into \eqref{bulk-massage}
we get
\be
\label{I+II}
\Big({\mathscr{L}^{\rightarrow}_{i,i+1}}D(\cdot , \xi)\Big)(y) =
\text{I} + \text{II}
\ee
where
$$
\text{I} = T_L^{\xi_0} \left( \prod_{j\neq \{i,i+1\}} d(y_j,\xi_j)\right) T_R^{\xi_{N+1}} \cdot\Biggl(\sum_{\ell=1}^{\xi_{i+1}}  \binom{\xi_{i+1}}{\ell} y_{i}^{\xi_{i} + \ell} y_{i+1}^{\xi_{i+1} - \ell}   B(\ell,\xi_{i}+2s) \Biggl)
 \dfrac{\Gamma(2s)\Gamma(2s)}{\Gamma(\xi_{i} + 2s) \Gamma(\xi_{i+1} + 2s)}
 $$
 and
 $$
\text{II} = T_L^{\xi_0} \left( \prod_{j\neq \{i,i+1\}} d(y_j,\xi_j)\right) T_R^{\xi_{N+1}} \cdot\Biggl(y_{i}^{\xi_i} y_{i+1}^{\xi_{i+1}}  \int_{0}^{1}  d\alpha\, \alpha^{-1} (1-\alpha)^{2s-1} \left[ (1 - \alpha)^{\xi_i}  -1  \right] \Biggl)
 \dfrac{\Gamma(2s)\Gamma(2s)}{\Gamma(\xi_{i} + 2s) \Gamma(\xi_{i+1} + 2s)}.
 $$
The first term $\text{I}$ can be written in terms of Gamma functions as
\begin{align*}
& \text{I} = T_L^{\xi_0} \left( \prod_{j\neq \{i,i+1\}} d(y_j,\xi_j)\right) T_R^{\xi_{N+1}} \cdot \\
&\qquad\qquad \sum_{\ell=1}^{\xi_{i+1}} \dfrac{\Gamma(\xi_{i+1} +1)}{\Gamma(\ell +1) \Gamma(\xi_{i+1} - \ell +1)} y_{i}^{\xi_{i} + \ell} y_{i+1}^{\xi_{i+1} - \ell}   \dfrac{\Gamma(\ell) \Gamma(\xi_{i} + 2s)}{\Gamma(\xi_{i} + \ell +2s)} \dfrac{\Gamma(2s)\Gamma(2s)}{\Gamma(\xi_{i} + 2s) \Gamma(\xi_{i+1} + 2s)}.
\end{align*}
Recalling the definition \eqref{dualityfunction} 
of the duality function 
and the definition \eqref{eq:varphi} of the function $\varphi_s(k,n)$ 
one finds
\be
\label{I}
\text{I} =  
\sum_{\ell=1}^{\xi_{i+1}}  \varphi_s (\ell, \xi_{i+1}) D(y; \xi + \ell \delta_{i} -\ell \delta_{i+1}).
\ee
For the second term $\text{II}$, we observe that
\be
\label{use}
\int_{0}^{1}  d\alpha\, \alpha^{-1} (1-\alpha)^{2s-1} \left[ (1 - \alpha)^{\xi_i}  -1  \right] = \psi(2s)-\psi(\xi_i+2s)
\ee
where $\psi$ indicates the digamma function (i.e. the logaritmic derivative of the Gamma function). Here we  used  integral representation of the digamma function
\begin{equation}
\label{eq:intdigamma}
    \psi(z+1)=-\gamma_e+\int_0^1 d\beta\,\frac{1-\beta^z}{1-\beta}\,,
\end{equation} 
with $\gamma_e$ the Euler–Mascheroni constant.
Therefore, recalling again the definition \eqref{dualityfunction} 
of the duality function,  we get
\begin{align}
\label{II-simplify}
\text{II} =  
\Big(\psi(2s)-\psi(\xi_i+2s)\Big) D(y; \xi).
\end{align}
We then proceed by observing that 
\begin{equation}\label{digammadifference}
    \psi(2s)-\psi(\xi_i+2s) = - \sum_{\ell = 1}^{\xi_i} \varphi_s(\ell, \xi_i).
    \end{equation}
This identity can be shown by writing the rates \eqref{eq:varphi} in terms of the Beta function
\be 
\varphi_s(k,n) =  \binom{n}{k}B(k,n-k+2s)
\ee
and  using  the integral representation of the Beta function \eqref{eq:intbeta} and of the digamma function \eqref{eq:intdigamma} we find
\begin{equation}
 \begin{split}
\sum_{k=1}^n\varphi_s(k,n) &=\int_0^1d t \, \frac{t^{n+2s-1}}{1-t}\sum_{k=1}^n \binom{n}{k}\left(\frac{1-t}{t}\right)^{k}=\int_0^1d t \, \frac{t^{2s-1}}{1-t}(1-t^{n})= \psi(n+2s)-\psi(2s).\nonumber
 \end{split}
\end{equation}  
Thus, combining together  \eqref{II-simplify}  
and \eqref{digammadifference} we arrive to
\be
\label{II}
\text{II} = - \sum_{\ell=1}^{\xi_{i}}  \varphi_s (\ell, \xi_{i}) D(y; \xi).
\ee
Inserting the expressions \eqref{I} and \eqref{II} 
into the right hand side of  \eqref{I+II},
we get
$$
\Big({\mathscr{L}^{\rightarrow}_{i,i+1}}D(\cdot , \xi)\Big)(y) = \sum_{\ell=1}^{\xi_{i+1}}  \varphi_s (\ell, \xi_{i+1}) D(y; \xi + \ell \delta_{i} -\ell \delta_{i+1}) - \sum_{\ell=1}^{\xi_{i}}  \varphi_s (\ell, \xi_{i}) D(y; \xi).
$$
If we now define 
\be\label{def:hs}
h_s(n) = \sum_{k=1}^n \varphi_s (k,n),
\ee
and we recall Definition \ref{def:processdual} of the dual process, then we have shown that 
\begin{equation*}
\Big({\mathscr{L}^{\rightarrow}_{i,i+1}}D(\cdot , \xi)\Big)(y) = 
\Big({\mathscr{L}^{\leftarrow,\text{dual}}_{i,i+1}}D(y, \cdot  )\Big)(\xi) + 
\Big(h_s(\xi_{i+1}) - h_s(\xi_i)\Big) D(y,\xi)\,,
\end{equation*}
{
where 
${\mathscr{L}^{\leftarrow,\text{dual}}_{i,i+1}} = \sum_{k=1}^{\xi_{i+1}} \varphi_s(k,\xi_{i+1}) \Big[f(\xi+k\delta_i - k \delta_{i+1}) - f(\xi)\Big] $.
}
Repeating the same computation for the left part of the bulk generator leads to 
    \begin{equation*}
\Big({\mathscr{L}^{\leftarrow}_{i,i+1}}D(\cdot , \xi)\Big)(y) =
\Big({\mathscr{L}^{\rightarrow,\text{dual}}_{i,i+1}}D(y, \cdot  )\Big)(\xi) + 
\Big(h_s(\xi_{i}) - h_s(\xi_{i+1})\Big) D(y,\xi)\,,
\end{equation*}
{
where 
${\mathscr{L}^{\rightarrow,\text{dual}}_{i,i+1}} = \sum_{k=1}^{\xi_i}\varphi_s(k,\xi_i) \Big[f(\xi-k \delta_i + k \delta_{i+1}) - f(\xi)\Big]$.
}
All in all, adding up the last two expressions one arrives to 
\be
\label{bulk-duality}
\Big({\mathscr{L}_{i,i+1}}D(\cdot , \xi)\Big)(y) = \Big({\mathscr{L}^{\text{dual}}_{i,i+1}}D(y, \cdot  )\Big)(\xi)\,,
\ee
which proves bulk duality.

We now consider the left reservoir. 
Spelling out the action of the generator $\mathscr{L}_{1}$ on the duality function
$D$ we get
\be
\label{use2}
\Big(\mathscr{L}_{1} D(\cdot , \xi)\Big)(y) = \text{III} + \text{IV}
\ee
where
\begin{align*}
\text{III} &= T_L^{\xi_0} \int_{0}^{y_1} \dfrac{d\alpha}{\alpha} \left( 1-\dfrac{\alpha}{y_1}\right)^{2s-1} 
\left[(y_1 - \alpha)^{\xi_1} - y_1^{\xi_1}\right]
\dfrac{\Gamma(2s)}{\Gamma(2s+ \xi_1)}  \left(\prod_{j=2}^{N}d(y_j,\xi_j)\right) \, T_R^{\xi_{N+1}}
 \end{align*}
 and
\begin{align*}
\text{IV} &=
 T_L^{\xi_0} \int_{0}^{\infty} \dfrac{d\alpha}{\alpha} e^{-\lambda_L \alpha} \left[(y_1 + \alpha)^{\xi_1} - y_1^{\xi_1}\right]
\dfrac{\Gamma(2s)}{\Gamma(2s+ \xi_1)}
 \left(\prod_{j=2}^{N}d(y_j,\xi_j)\right) \, T_R^{\xi_{N+1}}.
 \end{align*}
For the term III we change variable $\alpha \to  {\alpha}/{y_1}$ so that we get 
\begin{align*}
\text{III} &= T_L^{\xi_0} \int_{0}^{1} {d\alpha}{\alpha}^{-1} \left( 1-\alpha\right)^{2s-1} 
\left[(1 - \alpha)^{\xi_1} -1\right] y_1^{\xi_1}
\dfrac{\Gamma(2s)}{\Gamma(2s+ \xi_1)}  \left(\prod_{j=2}^{N}d(y_j,\xi_j)\right) \, T_R^{\xi_{N+1}}.
 \end{align*}
 By using \eqref{use}  and \eqref{digammadifference} we recognize that
 \be
 \label{III}
 \text{III} = - \sum_{\ell=1}^{\xi_1} \varphi_{s}(\ell,\xi_1) D(y,\xi).
 \ee
 For the term IV we write the Newton binomial of $(y_1 + \alpha)^{\xi_1}$ and we note that the first term of the sum cancel to get
 \begin{align*}
\text{IV} &=
 T_L^{\xi_0} \sum_{\ell=1}^{\xi_1} {\xi_1 \choose \ell} y_1^{\xi_1-l}
 \frac{\Gamma(2s)}{\Gamma(2s+ \xi_1)}
 \int_{0}^{\infty} {d\alpha}{\alpha}^{\ell-1} e^{-\lambda_L \alpha} 
 \left(\prod_{j=2}^{N}d(y_j,\xi_j)\right) \, T_R^{\xi_{N+1}}.
 \end{align*}
The integral gives $\left(T_L\right)^{\ell} \Gamma(\ell) $ so that we can reconstruct the function $\varphi_s(\ell,\xi_1)$ and get 
 \begin{align*}
\text{IV} &=
  \sum_{\ell=1}^{\xi_1}\varphi_s(\ell,\xi_1)T_L^{\xi_0+\ell}  y_1^{\xi_1-\ell}
 \frac{\Gamma(2s)}{\Gamma(2s+ \xi_1 -\ell)}
 \left(\prod_{j=2}^{N}d(y_j,\xi_j)\right) \, T_R^{\xi_{N+1}}.
 \end{align*}
 Upon recalling the definition of the duality function \eqref{dualityfunction}, this gives
\be
 \label{IV}
\text{IV} =
  \sum_{\ell=1}^{\xi_1}\varphi_s(\ell,\xi_1)
  D(y,\xi-\ell \delta_{1}+ \ell \delta_0)\,.
\ee
Inserting the expressions \eqref{III} and \eqref{IV} into the right hand side of \eqref{use2} we then find
\be
\label{left-duality}
\Big(\mathscr{L}_{1} D(\cdot , \xi)\Big)(y) 
=   \Big({\mathscr{L}^{\text{dual}}_{0,1}}D(y, \cdot)\Big)(\xi) \,,
\ee
which proves the left boundary duality.
Similarly one gets
\be
\label{right-duality}
\Big(\mathscr{L}_{N} D(\cdot , \xi)\Big)(y) 
=   \Big({\mathscr{L}^{\text{dual}}_{N,N+1}}D(y, \cdot)\Big)(\xi) \,,
\ee
which establishes the right boundary duality.
The combination of \eqref{bulk-duality}, \eqref{left-duality} and \eqref{right-duality}
implies \eqref{duality} and the proof of the theorem is  completed.
\end{proof}


    
    
 
 \subsection{Proof of Theorem~\ref{prop:eq}}
 \label{sec3.2}
 {The moments of the stationary measure can be studied via the duality relation.}
The following statement (see Proposition 1 and 2 of \cite{giardina2007duality}) is a classical consequence of the duality relation: it holds in and out of equilibrium and it allows to write the 
expectation (with respect to the stationary measure) of the duality function in terms of the dual process. The stationary expectation of the duality function can be written in terms of the absorption probabilities of the dual process, namely 
\begin{equation}
\label{use-duality}
\mathbb{E}_{\mu_{N}}\left[ D(y, \xi) \right]= \sum_{k=0}^{|\xi|} \left(\dfrac{1}{\lambda_{L}} \right)^{k}  \left(\dfrac{1}{\lambda_{R}} \right)^{|\xi| - k} p_{\xi}(k) \,.
\end{equation}
Above $p_{\xi}(k) = \mathbb{P}_{\xi} \Big(\xi(\infty) = k \delta_{0} + (|\xi| - k)\delta_{N+1}\Big)$ denotes the probability of $k$ dual particles being eventually absorbed at site $0$ and the remaining $|\xi| - k$ particles being eventually absorbed at site $N+1$, when the dual process is initialized from the configuration $\xi\in\N^{N+2}$ with $|\xi| = \sum_{i=1}^{N} \xi_i $ particles and no particles at the sites $\{0,N+1\}$.

Under equilibrium  $\lambda_L = \lambda_R = \lambda$ we get
 \begin{equation*}
\mathbb{E}_{\mu_{N}} \left[D(y, \xi)\right]=\left(  \dfrac{1}{\lambda} \right)^{|\xi|}\,.
\end{equation*}
Using the explicit form of the duality function in equation \eqref{dualityfunction}, the above display becomes
 \begin{equation*}
\mathbb{E}_{\mu_{N}} \left[\prod_{i=1}^{N} y_{i}^{\xi_{i}} \right]= \prod_{i=1}^{N} \left(  \dfrac{1}{\lambda} \right)^{\xi_{i}} \dfrac{\Gamma(2s + \xi_i)}{\Gamma(2s)} \;,
\end{equation*}
which are recognized as the multivariate moments
of the product probability measure in equation \eqref{eq:density}.
{
In Appendix~\ref{appendixa} we show that $\mu_N$ is also reversible.}

  \subsection{Proof of Theorem~\ref{theo:main}}
  \label{sec3.3}
  
  The proof starts again from equation \eqref{use-duality}, which expresses the expectation of the duality function computed in a configuration $\xi$ with $n$ dual particles as a polynomial of order $n$ in the two temperatures $T_L$ and $T_R$,
  whose coefficient are the absorption probabilities of the $n$ dual particles.
  As our dual process is the same as the one in \cite{frassek2021exact} (cf. Definition \ref{def:processdual}
  of this paper to Definition 2.3
  of \cite{frassek2021exact}),
  we can just use the expression
  for the absorption probabilities found there.  { The formula for the  absorption probabilities \cite[(2.50)]{frassek2021exact} has been obtained by two of the authors, see also \cite{Frassek:2019imp,Frassek:2020omo} for the case of SSEP, exploiting the integrable structure of the model and its algebraic formulation within the Quantum Inverse Scattering Method \cite{Kulish:1981gi,Sklyanin:1988yz}, see also \cite{Faddeev:1996iy} for an excellent review.}
  
  Thus we obtain  the multivariate
  moments given in \eqref{eq:moments}, \eqref{eq:gfunc}  when
  the configuration having $n$ dual particles is described by the occupation numbers, or the expression given in \eqref{ed},
  \eqref {eq: express-two} when
  the configuration  is described by assigning
  the ordered positions of the $n$ dual particles.

   \subsection{Proof of 
  Theorem~\ref{fourier}}
  \label{sec3.4}

Let $u\in(0,1)$.
To prove the first item (local equilibrium) it is enough to study the convergence of the moments
in a macroscopic point $\lfloor uN\rfloor $.
Considering a configuration $(\xi_1,\ldots,\xi_N)\in\N^N$ having
$|\xi|$ dual particles located at positions $1\le x_1 \le x_2 \le \ldots \le x_{|{ \xi}|}\le N$, 
we define the shifted configuration $(\xi_1^u,\ldots,\xi_N^u)\in\N^N$ having
the particles located at positions
$x_1^u \le x_2^u \le \ldots \le x^u_{|{ \xi}|}$ with $x_k^u = x_k + [uN]$.
Then we would like to prove that 
\be
\label{proof-local}
\lim_{N\to\infty} \int_{\R_+^N} \Big[\prod_{i=1}^N y_i^{\xi_i^u}\frac{\Gamma(2s)}{\Gamma(2s+\xi_i^u)}\Big] = [T(u)]^{|\xi|}
\ee
where $T(u)$ denotes the macroscopic linear profile \eqref{eq:linear}.
Formula \eqref{eq: express-two}
gives that
\begin{equation}
\lim_{N\to\infty} g_{ x^u}(n)
= 
\lim_{N\to\infty}  \sum_{1\leq i_1< \ldots< i_n\leq |  \xi|} \; \prod_{\alpha=1}^n\frac{n-\alpha+2s(N+1-x_{i_\alpha}-\lfloor uN\rfloor)}{n-\alpha+2s(N+1)}=
{|\xi| \choose n} (1-u)^n\,.
\end{equation} 
Using then 
\eqref{ed}
we get
\begin{eqnarray}
\lim_{N\to\infty} \int_{\R_+^N} \Big[\prod_{i=1}^N y_i^{\xi_i^u}\frac{\Gamma(2s)}{\Gamma(2s+\xi_i^u)}\Big] 
& = &
\sum_{n=0}^{|\xi|}T_R^{|  \xi|-n}(T_L-T_R)^n {|\xi| \choose n} (1-u)^n \nonumber \\
& = & 
 [T_R + (T_L-T_R)(1-u)]^{|\xi|}\,.
\end{eqnarray}
This proves \eqref{proof-local}, which in  turn implies \eqref{theo:loc33} {when $f$ is a polynomials functions}.

To prove the second item (Fourier's law) we observe that,
as a consequence of the linear microscopic profile \eqref{eq:linear-micro},  the average current is the same for all bonds and is given by
 \be
 J_{i,i+1} = - 2s \frac{T_R-T_L}{N+1}.
 \ee
From this,  \eqref{eq:fourier} immediately follows.

\section{Algebraic description}
\label{sec4}

In this section we show how the integrable heat conduction model \eqref{eq:gen} can be expressed algebraically in terms of the generators of the $\mathfrak{sl}(2)$ Lie algebra. 
More precisely,  we establish a relation between the Markov generator
of the integrable heat conduction model and the Hamiltonian of the  open XXX Heisenberg spin chain, in a particular
representation, by using the
algebra generators as building blocks. 

Once the algebraic description of the Levy generator of the heat conduction model via the $\mathfrak{sl}(2)$ algebra generators is obtained, the proof of duality is a consequence of a basic intertwining relation between two representations of the underlying algebra.
In other words, the dual absorbing harmonic process also arises from the open XXX  chain with non-compact spins, using a different representation.
Duality is essentially the statement that the two representations (one
leading to the boundary-driven Levy process and the other leading to the absorbing harmonic process)
are indeed equivalent
representations.

The identification of the algebraic expression of the Levy generator closely follows the techniques used in the work of Derkachov \cite{Derkachov:1999pz} in relation to Baxter Q-operators for the XXX Heisenberg spin chain.

\subsection{Levy generator}
\label{algebraic-original}
For the description of the Levy generator we consider the following representation (labeled by the parameter $s>0$) of the $\mathfrak{sl}(2)$ Lie algebra
\begin{equation}\label{ctssu11} 
\begin{split}
\mathscr{K}^{+}= y\,,\qquad 
\mathscr{K}^{-} =(y \partial_{y} + 2s)\partial_{y} \,,\qquad 
\mathscr{K}^{0} = y\partial_{y} +s \,.
\end{split}
\end{equation}
The operators $\mathscr{K}^{+}, \mathscr{K}^{-}, \mathscr{K}^{0}$, acting on polynomial functions,
generate highest weight state representations of the $\mathfrak{sl}(2)$ Lie algebra and  satisfy the commutation relations:
\be  \label{crelation}
 [{\mathscr{K}}^{0},{\mathscr{K}}^{\pm}]=\pm {\mathscr{K}}^{\pm} \qquad  \text{and} \qquad [{\mathscr{K}}^{+},{\mathscr{K}}^{-}]=-2{\mathscr{K}}^{0}\;.
\ee
At each lattice site $i \in \lbrace 1, \ldots, N \rbrace$ we consider a copy of the $\mathfrak{sl}(2)$ algebra. We write the site in the subscript of the generator $\mathscr{K}_i^{a}$ with $a\in\{+,-,0\}$. Generators at different sites commute.

We treat the bulk and boundary parts of the Markov generator of our model  in the two subsections below. We will show that the local generators in equation \eqref{eq:gen-bulk-right} and \eqref{eq:gen-bulk-left} can be written as

\begin{align} \label{abstract-right-bulk}
{\mathscr{L}^{\rightarrow}_{i,i+1}}  &=
- e^{\mathscr{K}^{+}_{i}(\mathscr{K}^{0}_{i+1}+s)^{-1}\mathscr{K}^{-}_{i+1}} (\psi(\mathscr{K}^{0}_{i} +s)-\psi(2s))
e^{- \mathscr{K}^{+}_{i}(\mathscr{K}^{0}_{i+1}+s)^{-1}\mathscr{K}^{-}_{i+1}} \\
\intertext{and similarly} 
 \label{abstract-left-bulk}
{\mathscr{L}^{\leftarrow}_{i,i+1}} &=
- e^{\mathscr{K}^{+}_{i+1}(\mathscr{K}^{0}_{i}+s)^{-1}\mathscr{K}^{-}_{i}} (\psi(\mathscr{K}^{0}_{i+1} +s)-\psi(2s))
e^{- \mathscr{K}^{+}_{i+1}(\mathscr{K}^{0}_{i}+s)^{-1}\mathscr{K}^{-}_{i}}\;.
\end{align}
For the boundary terms we will need to consider an additional representation
of the $\mathfrak{sl}(2)$ Lie algebra, which will be associated to  fictitious extra-sites $0$ and $N+1$
\be
\label{ctssu11-site0} 
\begin{split}
 \mathscr{S}^{+}_0= T_L(T_L \partial_{T_L} + 2s)\,,\qquad 
 \mathscr{S}^{-}_0 =\partial_{T_L} \,,\qquad 
 \mathscr{S}^{0}_0 = T_L\partial_{T_L} +s \,.
 \end{split}
\ee
\be
\label{ctssu11-siteN+1} 
\begin{split}
 \mathscr{S}^{+}_{N+1}= T_R(T_R \partial_{T_R} + 2s)\,,\qquad 
 \mathscr{S}^{-}_{N+1} =\partial_{T_R} \,,\qquad 
 \mathscr{S}^{0}_{N+1} = T_R\partial_{T_R} +s \,.
 \end{split}
\ee
The operators $\mathscr{S}_i^{+}, \mathscr{S}_i^{-}, \mathscr{S}_i^{0}$, with $i\in\{0,N+1\}$, acting on polynomial functions of  variable $T_L$ when $i=0$
and of  variable $T_R$ when $i=N+1$,
also satisfy the $\mathfrak{sl}(2)$
commutation relations:
\be 
 [{\mathscr{S}}_i^{0},{\mathscr{S}}_i^{\pm}]=\pm {\mathscr{S}}_i^{\pm} \qquad  \text{and} \qquad [{\mathscr{S}}_i^{+},{\mathscr{S}}_i^{-}]=-2{\mathscr{S}}_i^{0}\;.
\ee
Then, for the boundary terms
\eqref{eq:leftbnd} and \eqref{eq:rightbnd}, we will show that
they can be written as
\begin{equation} \label{abstract-boundary} 
\mathscr{L}_{1}
=
-  e^{- \mathscr{S}^{+}_{0}(\mathscr{S}^{0}_{0}+s)^{-1} \mathscr{K}^{-}_{1}} \left(\psi(\mathscr{K}^{0}_{1}+s)-\psi(2s)\right)
e^{ \mathscr{S}^{+}_{0}(\mathscr{S}^{0}_{0}+s)^{-1}  \mathscr{K}^{-}_{1}} \,,
 \end{equation}  
and 
 \begin{equation} \label{abstract-boundaryN} 
\mathscr{L}_{N}
=
-  e^{- \mathscr{K}^{-}_{N}\mathscr{S}^{+}_{N+1}(\mathscr{S}^{0}_{N+1}+s)^{-1} } \left(\psi(\mathscr{K}^{0}_{N}+s)-\psi(2s)\right)
e^{\mathscr{K}^{-}_{N}\mathscr{S}^{+}_{N+1} (\mathscr{S}^{0}_{N+1}+s)^{-1} } \,
 \end{equation}   
 where  $ \mathscr{S}^{+}_{0}(\mathscr{S}^{0}_{0}+s)^{-1}=T_L$
 and 
 $\mathscr{S}^{+}_{N+1}(\mathscr{S}^{0}_{N+1}+s)^{-1} = T_R$.

\subsubsection{Bulk generator}
To show the equivalence of the bulk Levy generator as given in \eqref{eq:gen-bulk-right} and \eqref{eq:gen-bulk-left} and the algebraic expressions \eqref{abstract-right-bulk} and \eqref{abstract-left-bulk} we first insert the algebra generators \eqref{ctssu11}   into the latter and obtain
\begin{equation}\label{eq:term1}
{\mathscr{L}}_{i,i+1}^{\rightarrow} 
 = 
- e^{y_i\partial_{i+1}} (\psi(y_i\partial_i+2s)-\psi(2s))
e^{-y_i\partial_{i+1}}
\end{equation}
and
\begin{equation}\label{eq:term2}
{\mathscr{L}}_{i,i+1}^{\leftarrow} 
 = 
- e^{y_{i+1}\partial_i} (\psi(y_{i+1}\partial_{i+1}+2s)-\psi(2s))
e^{-y_{i+1}\partial_i}\,.
\end{equation}
Here and in the following, we write in shorthand $\partial_i$ for the
partial derivative $\partial_{y_i}$.
Focussing on \eqref{eq:term1} and proceeding formally, using the integral representation \eqref{eq:intdigamma} of the digamma function, it can be written 
\begin{equation}
\begin{split}
 {\mathscr{L}}_{i,i+1}^{\rightarrow}&= 
 -\int_0^1 d\beta\,\frac{\beta^{2s-1}}{\beta-1}\left[e^{y_i\partial_{i+1}}\beta^{y_i\partial_i}e^{-y_i\partial_{i+1}}-1\right]\,.
 \end{split}
\end{equation} 
We can now determine the action of this operator on  functions. Noting that the exponential $e^{\pm y_i\partial_{i+1}}$ induces the shift $y_{i+1} \to y_{i+1} \pm y_i$ (see formula \eqref{B1}) and the operator
$\beta^{y_i\partial_i}$  induces the rescaling $y_i\to\beta y_i$ (see formula \eqref{B2}) we obtain
\begin{equation}
\begin{split}
{\mathscr{L}}_{i,i+1}^{\rightarrow}f(y)&=
- \int_0^1d\beta \frac{\beta^{2s-1}}{\beta-1}\left[f(y_1,\ldots,\beta y_i,y_{i+1} +(1-\beta)y_i,\ldots,y_N)-f(y)\right]\,.
\end{split}
\end{equation}  
Finally,  after a change of variables $\beta=1-\alpha y_i^{-1}$  we  find 
\begin{equation}
\begin{split}
{\mathscr{L}}_{i,i+1}^{\rightarrow}f(y)
 &=\int_0^{y_i}\frac{d\alpha}{\alpha} \left(1-\frac{\alpha}{y_i}\right)^{2s-1}\left[f(y_1,\ldots,y_i-\alpha,y_{i+1}+\alpha,\ldots,y_N)-f(y)\right]
\end{split}
\end{equation}  
which coincides with \eqref{eq:gen-bulk-right}. 
In the same way, for ${\mathscr{L}}_{i,i+1}^{\leftarrow}$ we derive  \eqref{eq:gen-bulk-left}  from \eqref{eq:term2}.

\subsubsection{Boundary generator}
We now turn to the boundary terms. We treat the left reservoir only, as the the right reservoir is treated analougsly. To obtain \eqref{eq:leftbnd} 
from the algebraic expressions \eqref{abstract-boundary}  
we again insert the algebra generators \eqref{ctssu11} for site $1$ and we use the algebra generators \eqref{ctssu11-site0}  for the extra site $0$.
Then we can write $\mathscr{L}_1$ as
\begin{equation}
\begin{split}
 \mathscr{L}_1&=-e^{-T_L(y_1\partial_1+2s) \partial_1}\left(\psi(y_1\partial_1+2s)-\psi(2s)\right)e^{T_L (y_1\partial_1+2s) \partial_1}\,.
 \end{split}
\end{equation} 
When expanding the exponentials, 
we obtain
\begin{equation}\label{eq:massagea}
\begin{split}
\mathscr{L}_1 
&=-\sum_{k,l=0}^\infty(-1)^l \left(\psi(y_1\partial_1+k+2s)-\psi(2s)\right)\frac{\Big(-T_L(y_1\partial_1+2s)  \partial_1\Big)^{k+l}}{k!l!} \end{split}
\end{equation} 
where we used the identity
$$
[(y_1\partial_1 + 2s)\partial_1]^k f(y_1\partial_1+2s) 
= 
f(y_1\partial_1+k+2s) 
[(y_1\partial_1 + 2s)\partial_1]^k
$$
{based on the commutation relations \eqref{crelation} for $ \mathscr{K}^{-}$ and $ \mathscr{K}^{0}$.}
Moving all derivatives to the right of the number operators we further get
\begin{equation}\label{eq:massageb}
\begin{split}
\mathscr{L}_1 
&=-\sum_{k,l=0}^\infty(-1)^l \left(\psi(y_1\partial_1+k+2s)-\psi(2s)\right)\frac{\Gamma(y_1\partial_1+2s+k+l)  }{\Gamma(y_1\partial_1+2s  )}\frac{(-T_L\partial_1)^{k+l}}{k!l!}\\
&=\sum_{m=0}^\infty\sum_{k=0}^m (-1)^{m-k}\left(\psi(y_1\partial_1+k+2s)-\psi(2s)\right)\frac{\Gamma(y_1\partial_1+2s+m)  }{\Gamma(y_1\partial_1+2s  )}\frac{(-T_L\partial_1)^{m}}{k!(m-k)!}\\
 &= \psi(y_1\partial_1+2s)-\psi(2s)+\log(1-T_L\partial_1) \,,
 \end{split}
\end{equation}
where in the final equality we used that 
\begin{equation}\label{eq:sumform1}
 \sum_{k=0}^m \frac{(-1)^{-k}}{k!(m-k)!}\left(\psi(y_1\partial_1+k+2s)-\psi(2s)\right)=\begin{cases}
 \psi(y_1\partial_1+2s)-\psi(2s)\qquad \text{for}\qquad m=0\\[5pt]
 -\frac{1}{m}\frac{\Gamma(y_1\partial_1+2s)  }{\Gamma(y_1\partial_1+2s+m  )}\qquad \text{for}\qquad m>0
             \end{cases}\,,
\end{equation} 
 cf.~\cite[(3.22)]{frassek2021exact}.
%
%
%
%

As before for the case of the bulk terms, we find that the first two terms in \eqref{eq:massageb} yield 
\begin{equation}
\label{eq:massage2}
\begin{split}
 \left(\psi(y_1\partial_1+2s)-\psi(2s)\right)f(y)
 &=-\int_0^{y_1}\frac{d\alpha}{\alpha} \left(1-\frac{\alpha}{y_1}\right)^{2s-1}\left(f(y_1-\alpha,\ldots,y_N)-f(y)\right)\,.
 \end{split}
\end{equation} 
The third term in the last line of \eqref{eq:massageb} yields
\begin{equation} 
\label{eq:massage3}
\log(1-T_L\partial_1)f(y)=
-\int_{0}^\infty\frac{d\alpha}{\alpha}\exp[-T_L^{-1}\alpha]\left( f(y_1+\alpha,\ldots,y_N)-f(y)\right)\,.
\end{equation} 
This can be seen by using the shift formula \eqref{B1} and
by writing
\begin{equation} 
\log(1-T_L\partial_1)
= -\log(T_L^{-1})+\log(T_L^{-1}-\partial_1)=
-\int_{0}^\infty\frac{d\alpha}{\alpha}\exp[-T_L^{-1}\alpha]\left(\exp[\alpha\partial_1]-1\right)\,.
\end{equation} 
Here we used the integral representation for $x>0$ 
\begin{equation}
 \log(x)=\int_0^\infty \frac{d\alpha}{\alpha}\left(e^{-\alpha}-e^{-x\alpha}\right)\,.
\end{equation} 
Combining together \eqref{eq:massageb}, \eqref{eq:massage2} and \eqref{eq:massage3} and using that $T_L = \lambda_L^{-1}$ we have thus shown that the algebraic expression \eqref{abstract-boundary} 
produces the boundary generator \eqref{eq:leftbnd}. 

Similar boundary operators appeared in the study of high energy QCD and  $\mathcal{N}=4$ super Yang-Mills theory
\cite{Lange:2003ff,Braun:2003wx,Belitsky:2019ygi}.

\subsection{Absorbing harmonic process}
\label{algebraic-dual}
For the description of the dual absorbing process we consider another representation, labelled by $s>0$, acting on functions $g:\N_0\to\R$  as
\begin{equation}\label{discretesu11} 
\begin{split}
\bar K^{+}g  (n)= (2s+n)g(n+1)\,,\qquad 
\bar K^{-}g (n)=ng(n-1)\,,\qquad 
\bar K^{0}g (n)=(n+s)g(n) \,,
 \end{split}
\end{equation}
where $g(-1)=0$. They satisfy the so-called ``dual'' (or ``conjugate'') $\mathfrak{sl}(2)$ Lie algebra with commutation relations
\be 
 [\bar{K}^{0}, \bar{K}^{\pm}]=\mp \bar{K}^{\pm} \qquad  \text{and} \qquad [\bar{K}^{+},\bar{K}^{-}]=2\bar{K}^{0}\;.
\ee

As before, treating the bulk and boundary generators separately we will show that the generator of the dual absorbing process in \eqref{eq:gendual} can be written in more abstract form using the representation \eqref{discretesu11}.
We note that this representation is closely related to (although different from) the one used in
cf.~\cite[(3.2)]{frassek2021exact}, see the remark below.

For the bulk part it will be convenient to split the action of the local generator in left and right part. As for the Levy generator we will show the following four identities: on one
hand for the bulk terms we have
\begin{align} \label{abstract-right-bulk-dis}
{\mathscr{L}^{\rightarrow,\text{dual}}_{i,i+1}}  &=
 - e^{-\bar{K}^{-}_{i}(\bar{K}^{0}_{i}+s)^{-1}\bar{K}^{+}_{i+1}} (\psi(\bar{K}^{0}_{i+1} +s)-\psi(2s))
e^{ \bar{K}^{-}_{i}(\bar{K}^{0}_{i}+s)^{-1}\bar{K}^{+}_{i+1}} \\
& \quad - h_s(\bar{K}^0_i-s) +h_s(\bar{K}^0_{i+1}-s)\,, \nonumber
\intertext{and similarly} 
 \label{abstract-left-bulk-dis}
{\mathscr{L}^{\leftarrow,\text{dual}}_{i,i+1}} &=
- e^{-\bar{K}^{+}_{i}\bar{K}^{-}_{i+1}(\bar{K}^{0}_{i+1}+s)^{-1}} (\psi(\bar{K}^{0}_{i} +s)-\psi(2s))
e^{ \bar{K}^{+}_{i}\bar{K}^{-}_{i+1}(\bar{K}^{0}_{i+1}+s)^{-1}}\\
& \quad + h_s(\bar{K}^0_i-s)-h_s(\bar{K}^0_{i+1}-s) \;.\nonumber
\end{align}
{
Here the function $h_s$ was defined in \eqref{def:hs} and we recall the action of the generators as given in \eqref{discretesu11}. }
Obviously, when taking the sum 
${\mathscr{L}^{\text{dual}}_{i,i+1}} = {\mathscr{L}^{\rightarrow,\text{dual}}_{i,i+1}} +{\mathscr{L}^{\leftarrow,\text{dual}}_{i,i+1}}$ in \eqref{eq:genbulk} the diagonal terms
related to the function $h_s$ disappear.
On the other hand, for the boundary terms, we will have 
\begin{equation} \label{abstract-boundary-dis} 
\mathscr{L}^{\text{dual}}_{0,1}
=     - e^{\left( \bar{K}^{0}_{0} +s \right)^{-1}\bar{K}^{+}_{0} \bar{K}^{-}_{1}} \left( \psi( \bar{K}^{0}_{1} +s) - \psi(2s)  \right)   e^{-\left( \bar{K}^{0}_{0} +s \right)^{-1}\bar{K}^{+}_{0} \bar{K}^{-}_{1}} \,,
 \end{equation}  
and 
 \begin{equation} \label{abstract-boundaryN-dis} 
\mathscr{L}^{\text{dual}}_{N,N+1}
=
-  e^{\bar{K}^{-}_{N}\left( \bar{K}^{0}_{N+1} +s \right)^{-1}\bar{K}^{+}_{N+1} } \left( \psi( \bar{K}^{0}_{N} +s) - \psi(2s)  \right)   e^{-\bar{K}^{-}_{N}\left( \bar{K}^{0}_{N+1} +s \right)^{-1}\bar{K}^{+}_{N+1} } \,.
 \end{equation}  

\begin{remark}
Consider the representation of the $\mathfrak{sl}(2)$
algebra given by
$$
K^{+}g  (n)= (2s+n-1)g(n-1)\,,\qquad 
 K^{-}g (n)=(n+1)g(n+1)\,,\qquad 
 K^{0}g (n)=(n+s)g(n) \,,
$$
which satisfies the commutation relations
$$
 [{K}^{0}, {K}^{\pm}]=\pm {K}^{\pm} \qquad  \text{and} \qquad [{K}^{+},{K}^{-}]=-2{K}^{0}\;.
$$
This representation is just the transpose of the 
one introduced in \eqref{discretesu11}, i.e.
\be
\label{transpose}
(\bar{K}^a)^t=K^a \qquad\qquad a\in\{+,-,0\}\,.
\ee
{The local density of the spin chain with Heisenberg XXX Hamiltonian \cite[(4.17)]{frassek2021exact} is then recovered from the negative transpose of the {local} Markov generator  $\mathscr{L}^{dual}_{i,i+1}$ after using the identity }
\begin{equation}
 \begin{split}
  e^{-\bar{K}^{-}_{i}(\bar{K}^{0}_{i}+s)^{-1}\bar{K}^{+}_{i+1}} (\psi(\bar{K}^{0}_{i+1} +s)-\psi(2s))
  e^{\bar{K}^{-}_{i}(\bar{K}^{0}_{i}+s)^{-1}\bar{K}^{+}_{i+1}}
 - h_s(\bar{K}^0_i-s) +h_s(\bar{K}^0_{i+1}-s)\\
 =  e^{\bar{K}^{-}_{i}(\bar{K}^{0}_{i+1}+s)^{-1}\bar{K}^{+}_{i+1}} (\psi(\bar{K}^{0}_{i} +s)-\psi(2s))
e^{-\bar{K}^{-}_{i}(\bar{K}^{0}_{i+1}+s)^{-1}\bar{K}^{+}_{i+1}}   
 \,.
\end{split}
 \end{equation} 
 The identity above follows from the equivalence of the action of the bulk generator \eqref{abstract-right-bulk-dis} for $i=N$  and  the boundary generator \eqref{abstract-boundaryN-dis} 
(that is indeed verified by direct computation in \eqref{eq:bbulk2} and \eqref{eq:hopB} respectively).
\end{remark}

\subsubsection{Bulk}
We now compute the action of the generator \eqref{abstract-right-bulk-dis} on functions $g(n)$. The action the generator \eqref{abstract-left-bulk-dis} then follows.  For this purpose we consider the following object
\begin{equation} \label{abstract-dis2} 
   O_+(\gamma)= -e^{-\gamma \bar{K}^{+}} (\psi(\bar{K}^{0} +s)-\psi(2s))
e^{ \gamma \bar{K}^{+}}\,.
\end{equation}
When computed  on site $i$ and for $\gamma = \bar{K}^{-}_{i+1}(\bar{K}^{0}_{i+1}+s)^{-1} $ it matches the right hand side of equation \eqref{abstract-left-bulk-dis},
up to the diagonal terms. 

We proceed now by an explicit computation of the action of \eqref{abstract-dis2} on function $g:\mathbb{N}_{0}\to\R$ {with compact support}. The inner part gives that
\begin{equation*}
    e^{ \gamma \bar{K}^{+}} g(n) = \sum_{j=0}^{\infty} \dfrac{\gamma^{j}}{j!}\left( \bar{K}^{+} \right)^{j} g(n) = \sum_{j=0}^{\infty} \dfrac{\gamma^{j}}{j!} \dfrac{\Gamma(n+2s+j)}{\Gamma(n+2s)}  g(n+j) \;.
\end{equation*}
Furthermore, since $\bar{K}^0$ acts diagonally, recalling \eqref{digammadifference} we have
\begin{equation} \label{identityko}
\left(\psi(\bar{K}^{0} +s)-\psi(2s)\right) g(n) = h_{s}(n) g(n)\,.
\end{equation}
Thus we get  
\begin{equation}
    e^{-\gamma \bar{K}^{+}} (\psi(\bar{K}^{0} +s)-\psi(2s))
e^{ \gamma \bar{K}^{+}} g(n) = \sum_{l=0}^{\infty} \sum_{j=0}^{\infty} (-1)^{l} \dfrac{\gamma^{j+l}}{j! l!} \dfrac{\Gamma(n+2s+j+l)}{\Gamma(n+2s)}  h_{s}(n+l) g(n+j+l)\,.
\end{equation}
Performing a change of variables and then changing the order of summation we have 
\begin{equation}
    e^{-\gamma \bar{K}^{+}} (\psi(\bar{K}^{0} +s)-\psi(2s))
e^{ \gamma \bar{K}^{+}} g(n) = \sum_{k=0}^{\infty} \gamma^{k}\sum_{l=0}^{k} \dfrac{(-1)^{l} }{ l! (k-l)!} \dfrac{\Gamma(n+2s+k)}{\Gamma(n+2s)}  h_{s}(n+l) g(n+k)\,.
\end{equation}
Separating the term for $k=0$ and using the identity \eqref{eq:sumform1} for $k>0$, i.e.
\begin{equation}
\sum_{l=0}^{k} \dfrac{(-1)^{l} }{ l! (k-l)!} \dfrac{\Gamma(n+2s+k)}{\Gamma(n+2s)}  h_{s}(n+l)= - \dfrac{1}{k} \;,
\end{equation}
we find that 
\begin{equation}
O_+(\gamma)g(n) 
= -h_s(n) g(n) + \sum_{k=1}^{\infty} \dfrac{\gamma^{k}}{k} g(n+k) \;.
\end{equation}
Now we consider the operator $O_+(\gamma)$ above on site $i$, we set $\gamma = \bar{K}^{-}_{i+1}(\bar{K}^{0}_{i+1}+s)^{-1} $ and we act on a function $f : \N^{N+2}\to\R$ to get 
\begin{eqnarray}
- e^{-\bar{K}^{+}_{i}\bar{K}^{-}_{i+1}(\bar{K}^{0}_{i+1}+s)^{-1}} (\psi(\bar{K}^{0}_{i} +s)-\psi(2s))
e^{ \bar{K}^{+}_{i}\bar{K}^{-}_{i+1}(\bar{K}^{0}_{i+1}+s)^{-1}} f(\xi) = \\ \sum_{k=1}^{\infty} \dfrac{ \left(\bar{K}^{-}_{i+1} (\bar{K}^{0}_{i+1}+s)^{-1} \right)^{k}}{k} f(\xi+k\delta_i) - h_s(\xi_i)f(\xi) =  \\ \sum_{k=1}^{\xi_{i+1}} \dfrac{ \xi_{i+1}!}{k (\xi_{i+1} -k)!} \dfrac{\Gamma(\xi_{i+1}+2s+k)}{\Gamma(\xi_{i+1}+2s)} f(\xi + k \delta_i - k \delta_{i+1}) - h_s(\xi_i)f(\xi) = \\
\sum_{k=1}^{\xi_{i+1}} \varphi_s(k, \xi_{i+1})  f(\xi + k \delta_i - k \delta_{i+1}) - h_s(\xi_i)f(\xi) = \\
\left( {\mathscr{L}^{\leftarrow,\text{dual}}_{i,i+1}} + h_s(\xi_{i+1})- h_s(\xi_i) \right) f(\xi)\,.
\end{eqnarray}
Thus \eqref{abstract-left-bulk-dis} is proved. Similarly,
\begin{eqnarray}\label{eq:bbulk2}
- e^{-\bar{K}^{-}_{i}(\bar{K}^{0}_{i}+s)^{-1}\bar{K}^{+}_{i+1}} (\psi(\bar{K}^{0}_{i+1} +s)-\psi(2s))
e^{ \bar{K}^{-}_{i}(\bar{K}^{0}_{i}+s)^{-1}\bar{K}^{+}_{i+1}} f(\xi)
=
\\ \left( {\mathscr{L}^{ \rightarrow,\text{dual}}_{i,i+1}} + h_s(\xi_{i})- h_s(\xi_{i+1}) \right) f(\xi)\,,
\end{eqnarray}
from which \eqref{abstract-right-bulk-dis} is established.

\subsubsection{Boundary}
As before, consider the following object
\begin{equation} \label{abstract-dis1} 
   O_-(\gamma)= -e^{\gamma \bar{K}^{-}} (\psi(\bar{K}^{0} +s)-\psi(2s))
e^{ -\gamma \bar{K}^{-}}\,.
\end{equation}
When computed  on site $1$ and for $\gamma = (\bar{K}^{0}_{0}+s)^{-1} \bar{K}^{+}_{0}$ it matches the right hand side of equation \eqref{abstract-boundary-dis}.
We start by computing the action of the inner element:
\begin{equation*}
    e^{- \gamma \bar{K}^{-}} g(n) = \sum_{j=0}^{\infty} \dfrac{(-\gamma)^{j}}{j!} \left( \bar{K}^{-}\right)^{j}g(n) = \sum_{j=0}^{n} (-1)^j \dfrac{\gamma^{j}}{j!} \dfrac{n!}{(n-j)!} g(n-j) \;.
\end{equation*}
Next, using again the identity \eqref{identityko}, we have
\begin{equation*}
  (\psi(\bar{K}^{0} +s)-\psi(2s))  e^{- \gamma \bar{K}^{-}} g(n) = 
h_s(n)\sum_{j=0}^{n} (-1)^j \dfrac{\gamma^{j}}{j!} \dfrac{n!}{(n-j)!} g(n-j)\,.
\end{equation*}
Thus we arrive to
\begin{equation*}
   e^{\gamma \bar{K}^{-}} (\psi(\bar{K}^{0} +s)-\psi(2s))
e^{ -\gamma \bar{K}^{-}} g(n) 
   = \sum_{j=0}^{n}  \sum_{l=0}^{n-j}  (-1)^j  \dfrac{\gamma^{l+j}}{l!j!}\dfrac{n!}{(n-l-j)!} h_{s}(n-l)
 g(n-l-j) 
\end{equation*}
Performing the change of variable $k=l+j$ and changing the order of summation  we find that the above corresponds to
\begin{equation*}
\sum_{k=0}^{n} \gamma^{k} g(n-k)  \sum_{l=0}^{k}  (-1)^{k-l} \binom{n}{l} \binom{n-l}{k-l}  h_{s}(n-l)
 \;.
\end{equation*}
Separating the case $k=0$ together with the identity   cf.~\cite[(3.13)]{frassek2021exact}
\begin{equation*}
\sum_{l=0}^{k}  (-1)^{k-l} \binom{n}{l} \binom{n-l}{k-l}  h_{s}(n-l) = - \varphi_s(k, n)  \,,
\end{equation*}
we find that 
\begin{equation}
     O_-(\gamma)= \sum_{k=1}^{n} \gamma^{k} \varphi_s(k, n) g(n-k) - h_{s}(n)g(n)
\end{equation}
Setting $\gamma= (\bar{K}^{0}_{0} + s)^{-1}\bar{K}^{+}_{0}$ and evaluating the above expression in site $i=1$ on functions $f : \N^{N+2}\to\R$ we find 
\begin{eqnarray}\label{eq:hopB}
 &&- e^{(\bar{K}^{0}_{0} + s)^{-1}\bar{K}^{+}_{0} \bar{K}_1^{-}} \left( \psi( \bar{K}^{0}_1 +s) - \psi(2s)  \right)   e^{-(\bar{K}^{0}_{0} + s)^{-1}\bar{K}^{+}_{0} \bar{K}_1^{-}} f(\xi)  \nonumber\\
&&\qquad=\sum_{k=1}^{\xi_1} \varphi_s(k, \xi_1) f(\xi -k \delta_1 + k \delta_0 ) - h_{s}(\xi_1)f(\xi) =\mathscr{L}^{dual}_{0,1}f(\xi)\;.
\end{eqnarray}

\subsection{Duality and change of representation}

In Section~\ref{sec3.1} we proved the duality relation between the heat conduction model and the absorbing particle process through a direct computation. Here we argue that duality can elegantly be obtained from the algebraic descriptions of the previous sections. 
The idea behind the algebraic approach is that one can deduce  duality relations for  Markov processes as a consequence of a change of representation of the underlying algebra. For the case of interest, the argument is described in the following.

First we note that the two representations \eqref{ctssu11} and  \eqref{discretesu11} satisfy the
duality relation  
\be
\label{dual-basic1}
\mathscr{K}^a d(\cdot,n)(z) = \bar K^a d(z,\cdot)(n) \qquad\qquad a\in \{+,-,0\}
\ee
for the single-site duality function 
$d:\mathbb{R}^{+} \times \mathbb{N}_{0} \rightarrow \mathbb{R}$ given by
\be
d(z,n) =z^{n} \dfrac{\Gamma(2s)}{\Gamma(2s+n)} \;.
\ee 
Furthermore, the two representations \eqref{ctssu11-site0} and  \eqref{discretesu11} satisfy the
duality relation
\be
\label{dual-basic2}
\mathscr{S}^a b(\cdot,n)(z) = \bar K^a b(z,\cdot)(n) \qquad\qquad a\in \{+,-,0\}
\ee
for the single-site duality function 
\be
b(z,n) =z^{n} \;.
\ee 

After having expressed the Markov processes algebraically, the duality relations of the Lie algebra generators above can be used to infer a duality relation between the  processes when regarding the Markov generator as an expansion in terms of elements of the universal enveloping algebra of $\mathfrak{sl}(2)$. The key idea is contained in the following general theorem
(for a proof see Theorem 2.1 of \cite{carinci2015dualities}, see also \cite{franceschini2018self} in the context of self-duality):
\begin{theorem}
Let $\mathfrak{g}$ be a Lie algebra generated by $\left \{ A_i \right \}_{i=1}^{n}$ and let $\left \{ B_i \right \}_{i=1}^{n}$ be generators of the conjugate Lie algebra.
    If, for every $i=1 , \ldots, n$, $A_i$ is dual to $B_i$ with duality function $D$ then the element $(A_{i_1})^{n_1} \dots (A_{i_k})^{n_k} $ is dual with the same duality function $D$ to the element $(B_{i_k})^{n_k} \dots (B_{i_1})^{n_1} $ for all $k \in \N $ and for all $n_1, \ldots, n_k \in \mathbb{N}$.
\end{theorem}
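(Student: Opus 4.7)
The plan is to reduce the claim to a two-operator identity and then to iterate by induction.

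I would begin by isolating the following key lemma: given two pairs $(A,B)$ and $(A',B')$ that are dual via the \emph{same} duality function $D$, the composition $AA'$ is dual to $B'B$ via $D$. Writing $A_z$ for ``$A$ acting on the first argument of $D$'' and $B_n$ for ``$B$ acting on the second argument'', this follows from the three-step manipulation
\begin{equation*}
A_z A'_z D \;=\; A_z B'_n D \;=\; B'_n A_z D \;=\; B'_n B_n D\,,
\end{equation*}
in which the outer equalities are the single-operator dualities of $(A',B')$ and $(A,B)$, and the middle equality uses that $A_z$ and $B'_n$ act on disjoint variables and therefore commute.

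From here the general claim would be proven by induction on the total length $m = n_1+\cdots+n_k$ of the word $(A_{i_1})^{n_1}\cdots(A_{i_k})^{n_k}$, viewed as an ordered product of $m$ elementary factors $A_{j_1}\cdots A_{j_m}$. The base case $m=1$ is precisely the hypothesis. For the inductive step, split the product as $A_{j_1}\cdot(A_{j_2}\cdots A_{j_m})$: the inductive hypothesis states that the right factor is dual to $B_{j_m}\cdots B_{j_2}$, and the two-operator lemma applied to $A_{j_1}$ and $A_{j_2}\cdots A_{j_m}$ yields that the full product is dual to $B_{j_m}\cdots B_{j_2}\cdot B_{j_1}$, i.e.\ the fully reversed $B$-word. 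Repackaging the reversed list of indices into blocks, each block $(A_{i_r})^{n_r}$ becomes $(B_{i_r})^{n_r}$ (internal reversal has no effect since all factors coincide), while the blocks themselves appear in reversed order, producing $(B_{i_k})^{n_k}\cdots(B_{i_1})^{n_1}$ as required.

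The main obstacle---in fact the only non-trivial point---is the commutativity step $A_z B'_n = B'_n A_z$. This relies on the assumption, implicit in the very formulation of duality through $D(\cdot,\cdot)$, that $A_i$ and $B_j$ act on disjoint variables (``different slots'' of $D$). In the concrete applications of the paper the assumption is automatic, since $\mathscr{K}_i^a$ is a differential operator on $\R_+$ while $\bar K_i^a$ is a difference operator on $\N_0$, so the two actions extend independently to functions of both arguments. Once this commutativity is granted, the rest of the proof is purely formal, and the order-reversal between the $A$- and $B$-sides emerges as a structural consequence of the middle step.
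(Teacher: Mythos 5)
Your proof is correct: the two-operator lemma (single-operator duality plus the fact that operators acting on the two different arguments of $D$ commute) followed by induction on the word length is exactly the standard argument, and the order reversal comes out just as you describe. Note that the paper itself does not prove this statement but defers to Theorem~2.1 of \cite{carinci2015dualities}, whose proof follows the same scheme as yours, so there is no substantive difference in approach; your explicit flagging of the commutativity of the two slot-actions as the only analytic point is appropriate and is harmless in the algebraic (polynomial/finitely supported) setting used here.
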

The rule of thumb of the theorem is that a given sequence of $A$ operators is dual to a sequence of $B$ operators but in reversed order. 
Combining together the above theorem
with the algebraic expression of the Markov generators  found in Section~\ref{algebraic-original} and in Section~\ref{algebraic-dual} one deduces that the algebra dualities \eqref{dual-basic1} and
\eqref{dual-basic2} imply
the duality of Theorem~\ref{duality-thm} with
\be
\label{duality-abstract}
D(y,\xi) = b_0(T_L,\xi_0) \Big[ \prod_{i=1}^N d(y_i,\xi_i)\Big] b(T_R,\xi_{N+1})\,,
\ee
for $y\in\R_+^N$ and $\xi\in \N_0^{N+2}$.

\begin{remark}
The duality function \eqref{duality-abstract} thus emerges
from 
the representation theory of $\mathfrak{sl}(2)$ and is indeed
the duality function of several
other models with the same underlying
Lie algebra \cite{carinci2015dualities,FGS}.
Via the algebraic approach one can
obtain other duality functions as well,
e.g. classical orthogonal polynomials \cite{carinci2019orthogonal, floreani2022orthogonal}, or extend the dualities to asymmetric systems, 
see for instance \cite{CFG, franceschini2022orthogonal} and references therein.
For the asymmetric version of the harmonic model studied in this paper  we refer the reader to \cite{Sasamoto,Povolotsky,barraquand,Frassek:2022fjs}.
\end{remark}

{
\paragraph{Acknowledgment.} This work has been performed under the  auspices of the ``Istituto Nazionale di Alta Matematica - F.Severi''
(INDAM). }

\appendix
\section{Reversibility at equilibrium} \label{appendixa}
{
Here we show that, in equilibrium setting $\lambda_L = \lambda_R = \lambda$, the generator $\mathscr{L}$ in \eqref{eq:gen} satisfies $\langle \mathscr{L}f,g\rangle = \langle f, \mathscr{L} g\rangle$ for all polynomial functions $f,g$. 
The scalar product above is
$$\langle f,g\rangle = \int_{\R_+^N} f(y) g(y) \mu_N(dy)\,,
$$ 
with $\mu_N$  the probability measure in
\eqref{eq:density}. 
In order to show this, we prove that $\langle \mathscr{L}_{i,i+1}f,g\rangle = \langle f, \mathscr{L}_{i,i+1}g\rangle$  and
similarly $\langle \mathscr{L}_{1}f,g\rangle = \langle f, \mathscr{L}_{1}g\rangle$ 
and  $\langle \mathscr{L}_{N}f,g\rangle = \langle f, \mathscr{L}_{N}g\rangle$.}
Using \eqref{eq:gen}
the claim then follows.

To alleviate notation, when considering the action of $\mathscr{L}_{i,i+1}$ on a function $g$ we write only the variables $y_i$ and $y_{i+1}$ and skip the
dependence of the functions 
on all the other variables,
which stay untouched. Similarly we do not write all the integrals
involving variables other than $y_i$ and $y_{i+1}$.
Thus, recalling the decomposition
$\mathscr{L}_{i,i+1} = \mathscr{L}^{\rightarrow}_{i,i+1} +\mathscr{L}^{\leftarrow}_{i,i+1}$ and 
modulo the abuse of notation, we may write
\begin{eqnarray*}
&& \left( \dfrac{\Gamma(2s)}{\lambda^{2s}}\right)^{2} \cdot\langle  \mathscr{L}_{i,i+1} f, g \rangle = \\
&& 
 \int_{\R^+} dy_i \int_{\R_+} dy_{i+1}  \int_{0}^{y_i}  \frac{d\alpha}{\alpha} \left(1-\frac{\alpha}{y_i}\right)^{2s-1} f(y_i-\alpha,y_{i+1}+  \alpha)
 g(y_i, y_{i+1} )  
(y_{i} y_{i+1})^{2s-1} e^{-\lambda(y_{i} + y_{i+1})} \\
 &&  
 \qquad
 - \int_{\R_+} dy_i \int_{\R_+} dy_{i+1}  \int_{0}^{y_i}  \frac{d\alpha}{\alpha} \left(1-\frac{\alpha}{y_i}\right)^{2s-1}f(y_i, y_{i+1})
 g(y_i, y_{i+1} )  (y_{i}y_{i+1})^{2s-1} e^{-\lambda(y_{i} + y_{i+1})}\\
&& 
+ \int_{\R_+} dy_i \int_{\R_+} dy_{i+1}  \int_{0}^{y_{i+1}}  \frac{d\alpha}{\alpha} \left(1-\frac{\alpha}{y_{i+1}}\right)^{2s-1} f(y_i+\alpha,y_{i+1}-  \alpha)
 g(y_i, y_{i+1} )  
   (y_{i} y_{i+1})^{2s-1} e^{-\lambda(y_{i} + y_{i+1})} \\
 &&  
 \qquad
 - \int_{\R_+} dy_i \int_{\R_+} dy_{i+1}  \int_{0}^{y_{i+1}}  \frac{d\alpha}{\alpha} \left(1-\frac{\alpha}{y_{i+1}}\right)^{2s-1}f(y_i, y_{i+1})
 g(y_i, y_{i+1} )   (y_{i}y_{i+1})^{2s-1} e^{-\lambda(y_{i} + y_{i+1})} \,.
\end{eqnarray*}
We proceed formally and consider the first multiple integral on the right hand side of the above display. 
Applying (twice) Fubini's theorem and changing to new variables $z_i = y_i - \alpha$ and $z_{i+1} =y_{i+1} + \alpha $
this can be rewritten as
$$
\int_{\R_+} dz_i \int_{\R_+} dz_{i+1}  \int_{0}^{z_{i+1}}  \frac{d\alpha}{\alpha} \left(1-\frac{\alpha}{z_{i+1}}\right)^{2s-1} f(z_i, z_{i+1})
 g(z_i + \alpha, z_{i+1} - \alpha )   (z_{i}z_{i+1})^{2s-1} e^{-\lambda(z_{i} + z_{i+1})}\,.
$$
Similarly, applying Fubini's theorem and using  the change of variables $z_i = y_i + \alpha$ and $z_{i+1} =y_{i+1} - \alpha$, the third multiple integral on the right hand side can be rewritten as
$$
\int_{\R_+} dz_i \int_{\R_+} dz_{i+1}  \int_{0}^{z_{i}}  \frac{d\alpha}{\alpha} \left(1-\frac{\alpha}{z_{i}}\right)^{2s-1} f(z_i, z_{i+1})
 g(z_i - \alpha, z_{i+1} + \alpha )  (z_{i}z_{i+1})^{2s-1} e^{-\lambda(z_{i} + z_{i+1})}\,.
$$
All in all, one gets
\begin{eqnarray*}
&&\left( \dfrac{\Gamma(2s)}{\lambda^{2s}}\right)^{2} \cdot\langle  \mathscr{L}_{i,i+1} h, g \rangle = \\
&& 
\int_{\R_+} dz_i \int_{\R_+} dz_{i+1}  \int_{0}^{z_{i+1}}  \frac{d\alpha}{\alpha} \left(1-\frac{\alpha}{z_{i+1}}\right)^{2s-1} f(z_i, z_{i+1})
 g(z_i + \alpha, z_{i+1} - \alpha )  (z_{i}z_{i+1})^{2s-1} e^{-\lambda(z_{i} + z_{i+1})} \\
 && 
 \qquad
 - \int_{\R_+} dy_i \int_{\R_+} dy_{i+1}  \int_{0}^{y_i}  \frac{d\alpha}{\alpha} \left(1-\frac{\alpha}{y_i}\right)^{2s-1}f(y_i, y_{i+1})
 g(y_i, y_{i+1} )   (y_{i}y_{i+1})^{2s-1} e^{-\lambda(y_{i} + y_{i+1})}\\
 && 
 +\int_{\R^+} dz_i \int_{\R_+} dz_{i+1}  \int_{0}^{z_i}  \frac{d\alpha}{\alpha} \left(1-\frac{\alpha}{z_i}\right)^{2s-1} f(z_i, z_{i+1})g(z_i-\alpha,z_{i+1}+  \alpha)
  (z_{i} z_{i+1})^{2s-1} e^{-\lambda(z_{i} + z_{i+1})} \\
 && 
 \qquad
 - \int_{\R_+} dy_i \int_{\R_+} dy_{i+1}  \int_{0}^{y_{i+1}}  \frac{d\alpha}{\alpha} \left(1-\frac{\alpha}{y_{i+1}}\right)^{2s-1}f(y_i, y_{i+1})
 g(y_i, y_{i+1} )   (y_{i}y_{i+1})^{2s-1} e^{-\lambda(y_{i} + y_{i+1})}\,.
\end{eqnarray*}
Combining together the first and the fourth lines of the r.h.s. of the above display, we recognize the action of $\mathscr{L}^{\leftarrow}_{i,i+1}$ on the function $g$. 
Similarly, combining together the second and the third lines, we recognize the action of $\mathscr{L}^{\rightarrow}_{i,i+1}$ on the function $g$. 
So we have proved that 
$\langle  \mathscr{L}_{i,i+1} f, g \rangle = \langle  f,  \mathscr{L}_{i,i+1} g \rangle$. 

For the boundary terms a similar computation holds. Let's consider the generator of the left reservoirs as the right one is similar. We have
\begin{align*}
\langle  \mathscr{L}_{1} h, g \rangle & = \int_{0}^{+ \infty} dy_{1} \int_{0}^{y_1} \dfrac{d\alpha}{\alpha} \left( 1-\dfrac{\alpha}{y_1} \right)^{2s-1}  h(y_1 - \alpha) g(y_1) \dfrac{\lambda^{2s}}{\Gamma(2s)} y_1^{2s-1} e^{- \lambda y_1} \\
& - \int_{0}^{+ \infty} dy_{1} \int_{0}^{y_1} \dfrac{d\alpha}{\alpha} \left( 1-\dfrac{\alpha}{y_1} \right)^{2s-1}  h(y_1 ) g(y_1) 
\dfrac{\lambda^{2s}}{\Gamma(2s)} y_1^{2s-1} e^{- \lambda y_1} \\
& + \int_{0}^{+ \infty} dy_{1} \int_{0}^{+ \infty} \dfrac{d\alpha}{\alpha} e^{-\lambda \alpha} h(y_1 + \alpha)g(y_1) 
\dfrac{\lambda^{2s}}{\Gamma(2s)} y_1^{2s-1} e^{- \lambda y_1} \\
& -  \int_{0}^{+ \infty} dy_{1} \int_{0}^{+ \infty} \dfrac{d\alpha}{\alpha} e^{-\lambda \alpha} h(y_1)g(y_1) \dfrac{\lambda^{2s}}{\Gamma(2s)} y_1^{2s-1} e^{- \lambda y_1}\,.
\end{align*}
As before, let's start by considering the first double integral of the right hand side. By Fubini and the change of variable $y_1 = z_1 + \alpha$ we find
\begin{eqnarray*}
&& \int_{0}^{+ \infty} dy_1
 \int_{0}^{y_1} \dfrac{d\alpha }{\alpha}
 \left( 1 - \frac{\alpha}{y_1} \right)^{2s-1}  h(y_1 - \alpha) g(y_1) \frac{\lambda^{2s}}{\Gamma(2s)}y_1^{2s-1} e^{- \lambda y_1}\\
 &&= 
  \int_{0}^{+ \infty} dz_1
  \int_{0}^{+ \infty} \frac{d\alpha}{\alpha}e^{- \lambda \alpha}  h(z_1) g(z_1+ \alpha) \frac{\lambda^{2s}}{\Gamma(2s)} z_{1}^{2s-1} e^{- \lambda z_1}\,.
\end{eqnarray*}
Similarly, applying Fubini and changing variable $y_1 = z_1 - \alpha$  to the third double integral we get
\begin{eqnarray*}
&&\int_{0}^{+ \infty} dy_1
\int_{0}^{+ \infty} \frac{d\alpha}{\alpha}e^{- \lambda \alpha} h(y_1+\alpha) g(y_1) \frac{\lambda^{2s}}{\Gamma(2s)} y_1^{2s-1} e^{- \lambda y_1} \\
&&= 
\int_{0}^{+ \infty} dz_{1} \int_{0}^{z_1} \dfrac{d\alpha}{\alpha} \left( 1-\dfrac{\alpha}{z_1} \right)^{2s-1}  h(z_1) g(z_1 - \alpha) \frac{\lambda^{2s}}{\Gamma(2s)} z_1^{2s-1} e^{- \lambda z_1}\,.
\end{eqnarray*}
Substituting above, this shows that $\langle  \mathscr{L}_1 h, g \rangle = \langle  h,  \mathscr{L}_1 g \rangle$.

\section{Shift formulas}
\label{B}
In this appendix we collect two useful formulas. First we note that  the Taylor series of the function $f(x+\alpha)$ around $\alpha=0$ can be written as
\begin{equation}
\label{B1}
f(x+\alpha)=\sum_{k=0}^\infty\frac{\alpha^k}{k!}f^{(k)}(x)= e^{\alpha \partial_x}f(x)\,.
\end{equation} 
Further, the Taylor series of the function $f(\alpha x)$ around $\alpha=1$ gives
\begin{equation}
\begin{split}
\label{B2}
f(\alpha x)&=\sum_{k=0}^\infty\frac{(\alpha-1)^k}{k!}x^kf^{(k)}(x)\\&=\sum_{k=0}^\infty\frac{(\alpha-1)^k}{k!}\frac{\Gamma(x\partial_x+1)}{\Gamma(x\partial_x+1-k)} f^{}(x)\\
&= \alpha^{x \partial_x}f(x)\,,
\end{split}
\end{equation} 
where in the first equality it has been used that
\begin{equation}
 x^k\partial_x^k=\frac{\Gamma(x\partial_x+1)}{\Gamma(x\partial_x+1-k)}\,,
\end{equation} 
{
while the second equality employes again Taylor's expansion.
}

{
\small
\bibliographystyle{utphys2}
\bibliography{refs}
}

\end{document}